\newcommand{\rme}{\mathrm{e}}
\newcommand{\rmi}{\mathrm{i}}
\newcommand{\rmd}{\mathrm{d}}
\newcommand{\balpha}{{\bm{\alpha}}}
\newcommand{\bell}{{\bm{\ell}}}
\newcommand{\bA}{{\bf A}}
\newcommand{\OO}{{\mathcal{O}}}
\DeclareMathOperator{\sgn}{sgn}
\DeclareMathOperator{\tr}{tr}
 \newtheorem{thm}{Theorem}[section]
 \newtheorem{cor}[thm]{Corollary}
 \newtheorem{prop}[thm]{Proposition}
 \theoremstyle{definition}
 \theoremstyle{remark}
 \newtheorem{rems}[thm]{Remarks}
 \newtheorem*{ex}{Example}
 \numberwithin{equation}{section}
\begin{document}

%-------------------------------------------------------------------------
% editorial commands: to be inserted by the editorial office
%
%\firstpage{1} \volume{228} \Copyrightyear{2004} \DOI{003-0001}
%
%
%\seriesextra{Just an add-on}
%\seriesextraline{This is the Concrete Title of this Book\br H.E. R and S.T.C. W, Eds.}
%
% for journals:
%
%\firstpage{1}
%\issuenumber{1}
%\Volumeandyear{1 (2004)}
%\Copyrightyear{2004}
%\DOI{003-xxxx-y}
%\Signet
%\commby{inhouse}
%\submitted{March 14, 2003}
%\rmeceived{March 16, 2000}
%\rmevised{June 1, 2000}
%\accepted{July 22, 2000}
%
%
%
%---------------------------------------------------------------------------
%Insert here the title, affiliations and abstract:
%

\title[Spectral Properties of Magnetic Chain Graphs] {
	Spectral Properties of Magnetic Chain Graphs
}

%----------Author 1
\author{P.~Exner}

\address{%
Doppler Institute for Mathematical Physics and Applied
Mathematics, \\ Czech Technical University in Prague,
B\v{r}ehov\'{a} 7, 11519 Prague, \\ and  Nuclear Physics Institute
ASCR, 25068 \v{R}e\v{z} near Prague, Czechia}

\email{exner@ujf.cas.cz}

\thanks{The research was supported by the Czech Science Foundation (GA\v{C}R) within the project 14-06818S and by the European Union with the project `Support for research teams on CTU' CZ.1.07/2.3.00/30.0034.}
%----------Author 2
\author{S.~Manko
}
\address{
Department of Physics, Faculty of Nuclear Science and Physical Engineering, \\ Czech Technical University in Prague,
Pohrani\v{c}n\'{i} 1288/1,  40501 D\v{e}\v{c}\'{i}n,
Czechia}
\email{stepan.manko@gmail.com}
%----------classification, keywords, date
\subjclass{Primary 81Q35; Secondary 81Q15, 47B39}

\keywords{Magnetic Schr\"odinger operators, chain graphs, local perturbations, eigenvalues in gaps, Saxon-Hutner conjecture}

\date{January 1, 2004}
%----------additions
%\dedicatory{}
%%% ----------------------------------------------------------------------

\begin{abstract}
We discuss spectral properties of a charged quantum particle confined to a chain graph consisting of an infinite array of rings under influence of a magnetic field assuming a $\delta$-coupling at the points where the rings touch. We start with the situation when the system has a translational symmetry and analyze spectral consequences of perturbations of various kind, such as a local change of the magnetic field, of the coupling constant, or of a ring circumference. A particular attention is paid to weak perturbations, both local and periodic; for the latter we prove a version of Saxon-Hutner conjecture.
\end{abstract}

%%% ----------------------------------------------------------------------
\maketitle
%%% ----------------------------------------------------------------------
%\tableofcontents

%%%%%%%%%%%%%%%%%%%%%%
\section{Introduction}

Quantum graphs are a subject both mathematically rich and useful in applications; we refer to the recent monograph \cite{BK13} for a broad panorama of this theory. One class of interest in this category are chain graphs having the form of an array of rings to which a quantum particle is confined. At the touching points of rings the wave functions are connected by suitable matching conditions, one of the simplest possibilities being the so-called $\delta$-conditions described below.

As long as such a chain is periodic, the spectrum of the corresponding Hamiltonian has a band structure, however, some of the band may be flat since the unique continuation principle of the usual Schr\"odinger theory in general does not hold for operators on graphs with a nontrivial topology. The picture changes when the periodicity is lost, local perturbation may give rise to eigenvalues in the gap of the unperturbed spectrum associated with localized states, see e.g. \cite{DET08}.

The picture becomes even more intriguing if the particle in question is charged and exposed to a magnetic field, even in the simplest situation when the chain graph $\Gamma$ is planar, the field is perpendicular to the graph plane, and no other potential is present. The Hamiltonian then acts as a one-dimensional magnetic Schr\"odinger operator on each edge. Its domain consists of all functions from the Sobolev space $H^2_\mathrm{loc}(\Gamma)$ and we assume that they obey the $\delta$-coupling at the graph vertices which are characterized by conditions
 % -------------- %
\begin{equation}
\label{cond:deltaCoupling1}
\psi_i(0) = \psi_j(0) = :\psi(0)\,,
\quad
i,j=1,\dots,n\,,
\quad
\sum_{i=1}^n
\mathscr{D}\psi_i(0)
=\alpha\,\psi(0)\,,
\end{equation}
 % -------------- %
in a vertex where $n$ edges meet. In our case we will have $n=4$; the meaning of the quasiderivative $\mathscr{D}$ in the formula will be explained in the next section.

In our previous paper \cite{EM15} we discussed properties of a straight magnetic chain and eigenvalues coming from local changes of the coupling constant $\alpha$ in \eqref{cond:deltaCoupling1}. Our aim here is to analyze a broader class of perturbations including local variations of the magnetic field and edge lengths. We shall also discuss weak perturbation, not necessarily of local character. If those are periodic we are going to prove an analogue of the well-known Saxon-Hutner conjecture  \cite[Sec.~III.2.3]{AGHH05}.

Let us briefly describe the contents of the paper. As we have mentioned, the periodic magnetic chain was analyzed in \cite{EM15} and we recall here the results only to the degree needed to make the present paper self-contained. Our main technical tool will be a conversion of the problem into a difference equation. Such a duality trick is well known \cite{Ca97, Ex97, Pa13}, however, we have to work it out for our purposes; this will be done in Sec.~\ref{s:dual}. The results are contained in the next two sections. First we show how the discrete spectrum coming from compactly supported perturbations of various sorts can be found and discuss in detail several examples. Next, in Sec.~\ref{s:weak}, we analyze the weak-perturbation situation, first for compactly supported variations of the magnetic field and coupling constants, then for periodic perturbations.

%%%%%%%%%%%%%%%%%%%%%%%
\section{Preliminaries}
\label{sec:prelim}

Let us start with the unperturbed system which is a ring chain $\Gamma$ of the form sketched in Fig.~\ref{fig:graph}; choosing the units appropriately we may suppose without loss of generality that the circumference of each ring is $2\pi$. We also suppose that the particle is exposed to a magnetic field generated by a vector potential $\bf A$; the field is assumed to be perpendicular to the graph plane and homogeneous\footnote{In fact, the only quantity of importance will be in the following the magnetic flux through the rings, hence in general the field must be just invariant with respect to the group of discrete shifts along the chain.}. The corresponding vector potential can be thus chosen tangential to each ring and constant; since the coordinates we use to parametrize $\Gamma$ refer to different orientations in the upper and lower part of the chain, respectively, we choose $-A$ as the potential value on the upper halfcircles and $A$ on the lower ones.

 % -------------- %
\begin{figure}[h!]
\centering
    \includegraphics[scale=0.8]{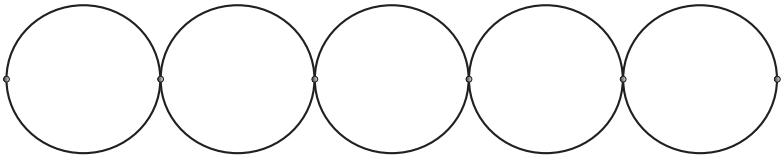}
\caption{The chain graph $\Gamma$}\label{fig:graph}
\end{figure}
 % -------------- %

Furthermore, in what follows we exclude integer values of $A$ from the consideration. This is possible due to the fact that the spectral properties of the system are invariant with respect to the change of $A$ by an integer which reflects the existence of a simple gauge transformation between such cases. We note that in the chosen units the magnetic flux quantum is $2\pi$ and $A=\frac12 B= \frac{1}{2\pi}\Phi$; we can then rephrase the above claim saying the systems differing by an integer number of flux quanta through each ring are physically equivalent. In this sense the case of an integer $A$ is thus equivalent to the non-magnetic chain treated in \cite{DET08}.

The particle Hamiltonian $-\Delta_{\alpha,A}$ acts as $(-\rmi\nabla-{\bf A})^2$ on each graph edge, with the domain consisting of all functions from the Sobolev space $H^2_\mathrm{loc}(\Gamma)$ which satisfy the boundary conditions (\ref{cond:deltaCoupling1}) at the vertices of $\Gamma$, where the quasiderivatives are equal to the sum of the derivative and the function value multiplied by the tangential component of $e{\bf A}$ as usual \cite{KS03}. The peculiarity of the present model is that we have two pairs of vectors of opposite orientation so their contributions cancel and the left-hand side of (\ref{cond:deltaCoupling1}) is in fact nothing else than the sum of the derivatives taking into account different coordinate orientations, in other words, (\ref{cond:deltaCoupling1}) is the standard $\delta$ coupling. In contrast to \cite{EM15} we assume throughout this paper that the coupling constant $\alpha$ is the same at each vertex  and we are going to determine the band-and-gap structure of the spectrum.

 % -------------- %
\begin{figure}[h!]
\centering
    \includegraphics[scale=.5]{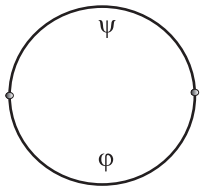}
\caption{Elementary cell of the periodic system}\label{fig:cell}
\end{figure}
 % -------------- %

In view of the periodicity the band-and-gap structure of the spectrum of $-\Delta_{\alpha,A}$ can be computed using Bloch-Floquet decomposition \cite[Sec.~4.2]{BK13}. Consider an elementary cell with the wave function branches denoted as in Fig.~\ref{fig:cell} and inspect the spectrum of the Floquet components of $-\Delta_{\alpha,A}$. Since the operator acts as $-\mathscr{D}^2:=(\rmi\frac{\rmd}{\rmd x}- A)^2$ on the upper halfcircles and as $-\mathscr{D}^2:=(\rmi\frac{\rmd}{\rmd x}+ A)^2$ on the lower ones, each component of the eigenfunction with energy $E:= k^2\neq0$ is a linear combination of the functions $\rme^{-\rmi Ax}\rme^{\pm\rmi kx}$ on the upper graph links and of $\rme^{\rmi Ax}\rme^{\pm\rmi kx}$ on the lower ones. In what follows we conventionally employ the principal branch of the square root, that is, the momentum $k$ is positive for $E>0$, while for $E$ negative we put $k=\rmi\kappa$ with $\kappa>0$. For a given $E\neq0$, the wave function components on the elementary cell are therefore given by
 % -------------- %
\begin{alignat}{2}
\label{wavefunct}
\begin{aligned}
\psi(x)&=\rme^{-\rmi Ax}
\big(C^+\rme^{\rmi k  x}+C^-\rme^{-\rmi k  x}\big)\,,
    &\qquad x&\in[0,\pi)\,, \\
\varphi(x)&=\rme^{\rmi Ax}
\big(D^+\rme^{\rmi k  x}+D^-\rme^{-\rmi k  x}\big)\,,
    &\qquad x&\in[0,\pi)\,.
\end{aligned}
\end{alignat}
 % -------------- %
As we have said, at the contact point the $\delta$-coupling is assumed, i.e. we have
 % -------------- %
\begin{gather} \label{deltacoupl}
\begin{gathered}
\psi(-0)=\psi(+0)=\varphi(-0)=\varphi(+0)\,,\\
-\mathscr{D}\psi(-0)+\mathscr{D}\psi(+0)-\mathscr{D}\varphi(-0)
+\mathscr{D}\varphi(+0)=\alpha\,\psi(0)\,,
\end{gathered}
\end{gather}
 % -------------- %
where the left limits, $\psi(-0)$ etc., refer to values on the left-neighbor circle parametrized by $x\in[-\pi,0)$. On the other hand, the function values at the `endpoints' of the cell are related the Floquet conditions,
 % -------------- %
\begin{align}\label{floquet}
\begin{aligned}
\psi(0)=\rme^{\rmi\theta}\psi(\pi)\,,
\qquad
\mathscr{D}\psi(-0)
=
\rme^{\rmi\theta}\mathscr{D}\psi(\pi-0)\,,\\
\varphi(0)=\rme^{\rmi\theta}\varphi(\pi)\,,
\qquad
\mathscr{D}\varphi(-0)=\rme^{\rmi\theta}\mathscr{D}\varphi(\pi-0)\,,
\end{aligned}
\end{align}
 % -------------- %
with $\theta$ running through $[-\pi,\pi)$; alternatively we may say that the quasimomentum $\frac1{2\pi}\theta$ runs through $[-\frac12,\frac12)$, the Brillouin zone of the problem. In both cases the vector potential contributions subtract and $\mathscr{D}$ in \eqref{deltacoupl} and \eqref{floquet} can be replaced by the usual derivative.

Using the explicit structure of the wave function  \eqref{wavefunct}, its continuity at the graph vertices as well as Floquet conditions \eqref{floquet}, one easily deduces that
 % -------------- %
\begin{equation}\label{C+viaC-}
	C^+=\mu(A)C^-\,,\qquad	D^+=\mu(-A)D^-\,,
\end{equation}
 % -------------- %
where
 % -------------- %
\[
\mu(A)=-\frac{1-\rme^{\rmi(\theta-A\pi-k\pi)}}
{1-\rme^{\rmi(\theta-A\pi+k\pi)}}\,.
\]
 % -------------- %
Thus from the first line in \eqref{deltacoupl} we conclude that
 % -------------- %
\begin{equation}\label{DviaC}
	D^-=\frac{1+\mu(A)}{1+\mu(-A)}C^-\,,
\end{equation}
 % -------------- %
while from the second one we find after simple manipulations
the characteristic equation for $-\Delta_{\alpha,A}$, namely
 % -------------- %
\begin{equation}\label{squareEq}
\sin k\pi\cos A\pi
\big(\rme^{2\rmi\theta}-2\xi(k)\rme^{\rmi\theta}+1 \big)=0
\end{equation}
 % -------------- %
with
 % -------------- %
\[
\xi(k) := \frac{1}{\cos A\pi}
\Big(\cos k\pi + \frac{\alpha}{4k}\sin k\pi\Big)\,.
\]
 % -------------- %
The special cases $A-\frac12\in\mathbb{Z}$ and $k \in\mathbb{N}$ lead to infinitely degenerate eigenvalues. Assume that none of this possibilities occur, then we get a quadratic equation for the phase factor $\rme^{\rmi\theta}$ having real coefficients for any $k \in\mathbb{R}\cup\rmi\mathbb{R}\setminus\{0\}$ and the discriminant
 % -------------- %
\[
D=4(\xi(k)^2-1)\,.
\]
 % -------------- %
One has to determine the values of $k^2$ for which there is a $\theta\in[-\pi,\pi)$ such that \eqref{squareEq} is satisfied, in other words, those $k^2$ for which it has, as an equation for the unknown
$\rme^{\rmi\theta}$, at least one root of modulus one. Note that a pair of solutions of \eqref{squareEq} always gives one when multiplied, regardless the value of $k$, hence either both roots are complex conjugated numbers of modulus one, or one is of modulus greater than one and the other has the modulus smaller than one. Obviously, the latter situation corresponds to a positive discriminant, and the former one to the discriminant less or equal to zero. We shall not repeat the discussion which the reader can find in \cite{EM15} and limit ourselves to quoting its conclusions.

 % -------------- %
\begin{thm}
\label{thm:spectrumUnperturbed}
If $A-\frac12\in\mathbb{Z}$; then the spectrum of $-\Delta_{\alpha,A}$ consists of two series of infinitely degenerate eigenvalues, namely $\{k^2\in\mathbb{R}\colon\, \xi(k)=0\}$ and $\{k^2\in\mathbb{R}\colon\, k\in\mathbb{N}\}$.
If $k^2\in\mathbb{R}$ is such that $\xi(k)=0$, then the corresponding $j$th eigenfunction is given by
 % -------------- %
\begin{align*}
\psi_j(x)&=-\rme^{-\rmi Ax}\sin kx\,,
&\quad
\psi_{j+1}(x)&=\rme^{\rmi A(\pi-x)}\sin k(\pi-x)\,,
\quad
x\in[0,\pi]\,,\\
\varphi_j(x)&=\rme^{\rmi Ax}\sin kx\,,
&\quad
\varphi_{j+1}(x)&=\rme^{\rmi A(x-\pi)}\sin k(x-\pi)\,,
\quad x\in[0,\pi]\,,
\end{align*}
 % -------------- %
on the $j$th and $(j+1)$st circles and vanishes elsewhere for all $j\in\mathbb{Z}$.
If $k^2\in\mathbb{R}$ is such that $k\in\mathbb{N}$, then the corresponding $j$th eigenfunction is given by
 % -------------- %
\begin{alignat*}{3}
\psi_j(x)&=\rme^{-\rmi Ax} \sin kx\,,
\quad
\psi_{j+1}(x)&&=(-1)^{k+1}\rme^{\rmi A(\pi-x)} \sin kx\,,
\quad
&&x\in[0,\pi]\,,\\
\varphi_j(x)&=-\rme^{\rmi Ax}\sin kx\,,
\quad
\varphi_{j+1}(x)&&=(-1)^ k \rme^{\rmi A(x-\pi)}\sin kx\,,
\quad
&&x\in[0,\pi]\,,
\end{alignat*}
 % -------------- %
on the $j$th and $j+1$st circles and vanishes elsewhere.

On the other hand, suppose that $A-\frac12\notin\mathbb{Z}$; then the spectrum of $-\Delta_{\alpha,A}$ consists of infinitely degenerate eigenvalues equal to $k^2$ with $k\in\mathbb{N}$ and with the above eigenfunctions, and absolutely continuous spectral bands with the following properties:

Every spectral band except the first one is contained in an interval $(n^2,(n+1)^2)$ with $n\in\mathbb{N}$. The position of the first spectral band depends on $\alpha$, namely, it is included in $(0,1)$ if  $\alpha>4(|\cos A\pi|-1)/\pi$ or it is negative if $\alpha<-4(|\cos A\pi|+1)/\pi$, otherwise the first spectral band contains zero.
\end{thm}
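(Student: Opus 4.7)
My plan is to read the spectrum directly from the characteristic equation \eqref{squareEq}, exploiting its product form $\sin k\pi\cdot\cos A\pi\cdot(\rme^{2\rmi\theta}-2\xi(k)\rme^{\rmi\theta}+1)=0$. The three factors correspond exactly to the three pieces of the spectrum listed in the statement: $\cos A\pi=0$ is the case $A-\tfrac12\in\mathbb{Z}$, $\sin k\pi=0$ gives $k\in\mathbb{N}$, and the quadratic factor controls the dispersive bands. In the first two regimes the Floquet phase $\theta$ disappears from the equation, so the Bloch decomposition yields the same eigenvalue on every fibre and the corresponding $k^2$ is an infinitely degenerate eigenvalue of $-\Delta_{\alpha,A}$.

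I would dispose of the flat bands first by a direct construction. The candidate eigenfunctions stated in the theorem are of the form $\rme^{\mp\rmi Ax}\sin kx$ (or its mirror in $\pi-x$), which automatically solve $-\mathscr{D}^2\psi=k^2\psi$ on each halfcircle since $\rme^{\mp\rmi Ax}\rme^{\pm\rmi kx}$ span the local kernel. The only remaining check is the $\delta$-coupling at the three vertices touched by the two-ring support: at the two outer vertices the functions vanish (because $\sin k\pi=0$ when $k\in\mathbb{N}$, or because of the chosen argument $\pi-x$ in the other case), and at the central vertex continuity of the four wave-function branches forces $\rme^{2\rmi A\pi}=-1$, i.e.\ $\cos A\pi=0$, while the derivative-sum condition reduces to $\cos k\pi+\tfrac{\alpha}{4k}\sin k\pi=0$, which is precisely the condition $\xi(k)=0$. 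Infinite multiplicity then follows because translates of the constructed eigenfunction along the chain yield countably many $L^2$-independent copies.

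Under the assumption $A-\tfrac12\notin\mathbb{Z}$ the dispersive spectrum is extracted from the quadratic factor. Since the product of its two roots in $\rme^{\rmi\theta}$ equals $1$, a root of modulus one exists iff the roots are a complex-conjugate pair, i.e.\ iff the discriminant $D=4(\xi(k)^2-1)\le 0$; equivalently, the band condition is $|\xi(k)|\le 1$. At integer $k=n$ one has $\xi(n)=(-1)^n/\cos A\pi$, and since $A$ is non-integer $|\cos A\pi|<1$, so $|\xi(n)|>1$ and each $k=n$ lies strictly in a gap. On $(n,n+1)$ the continuous function $\xi$ changes sign, so by the intermediate value theorem it crosses both $\pm 1$ and a non-empty band $\{k^2:|\xi(k)|\le 1\}$ arises inside the open interval $(n^2,(n+1)^2)$.

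For the first band I would analyse $\xi$ near $k=0$ and along the imaginary axis. Using $\sin k\pi/k\to\pi$ one gets $\xi(0)=(1+\alpha\pi/4)/\cos A\pi$; on $k=\rmi\kappa$ the value $\xi(\rmi\kappa)=(\cosh\kappa\pi+\tfrac{\alpha}{4\kappa}\sinh\kappa\pi)/\cos A\pi$ is real and diverges to $\sgn(\cos A\pi)\cdot\infty$ as $\kappa\to\infty$. Comparing $\xi(0)$ with $\pm 1$ and tracking whether the first exit of $|\xi|$ from $[-1,1]$ happens for positive $k$ or along the imaginary axis yields the announced trichotomy. I expect the main obstacle to be the sign bookkeeping between the two possible signs of $\cos A\pi$, but the gauge invariance $A\mapsto A+1$ negates $\xi$ while leaving the spectrum invariant, which is why only $|\cos A\pi|$ enters the final thresholds $\alpha>4(|\cos A\pi|-1)/\pi$ and $\alpha<-4(|\cos A\pi|+1)/\pi$.
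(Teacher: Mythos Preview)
The paper does not actually prove this theorem here; immediately before the statement it says ``We shall not repeat the discussion which the reader can find in \cite{EM15} and limit ourselves to quoting its conclusions.'' Your plan---reading the spectrum off the factored characteristic equation \eqref{squareEq}, verifying the compactly supported eigenfunctions by direct substitution into the $\delta$-coupling conditions, extracting the band condition $|\xi(k)|\le 1$ from the discriminant of the quadratic factor, and then locating the first band by evaluating $\xi$ at $k=0$ and along $k=\rmi\kappa$---is exactly the argument the paper's preliminary discussion is setting up and which \cite{EM15} carries out; it is correct.

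One small point: your paragraph on the flat bands handles the central-vertex check only for the $\xi(k)=0$ family (where continuity indeed forces $\rme^{2\rmi A\pi}=-1$). For the $k\in\mathbb{N}$ family all four branches vanish at \emph{every} vertex, so continuity is trivial for arbitrary $A$---which is why these eigenvalues persist in the $A-\tfrac12\notin\mathbb{Z}$ case as well---and only the cancellation of the four outgoing derivatives at the central vertex needs to be checked; this is a two-line computation using $\sin k\pi=0$, $\cos k\pi=(-1)^k$.
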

 % -------------- %

We finish the section with calculating (for each circle) the quantum flux of particle going through upper and lower semicircles, which are obtained from the formul{\ae}
$J_\psi=-\rmi(\psi^*\mathscr{D}\psi-(\mathscr{D}\psi)^*\psi)$
and
$J_\varphi=-\rmi(\varphi^*\mathscr{D}\varphi-(\mathscr{D}\varphi)^*\varphi)$
with wave functions (1).
Direct calculations give us
 % -------------- %
\begin{align*}
J_\psi=2k\big(|C^+|^2-|C^-|^2\big)\,,
\\
J_\varphi=2k\big(|D^+|^2-|D^-|^2\big)\,.
\end{align*}
 % -------------- %
Notice that from the Bloch-Floquet theorem it follows that the above probability currents are conserved along the chain. On the other hand, exploiting formul{\ae} \eqref{C+viaC-}, \eqref{DviaC}, one concludes that
 % -------------- %
\[
\frac{J_\psi}{J_\varphi}
=
\frac{|\mu(A)|^2-1}{|\mu(-A)|^2-1}
\cdot\bigg|\frac{\mu(-A)+1}{\mu(A)+1}\bigg|^2\,.
\]
 % -------------- %
As a result, in the non-magnetic case the probability current has a mirror symmetry, $J_\psi/J_\varphi=1$. On the other hand, if a nontrivial magnetic field, $A\not\in\mathbb{Z}$, is applied, the transport becomes in general asymmetric -- cf. also Rem.~\ref{rems}(ii) below.

%%%%%%%%%%%%%%%%%%%%%%%%%%%%%%%%%
\section{Dual discrete operators}
\label{s:dual}

In this section we consider more general system, namely, we  assume that the chain graph under consideration has circles of different lengths, and the coupling constants as well as the values of the magnetic field could be different on different circles.
Denote by $\balpha=\{\alpha_j\}_{j\in\mathbb{Z}}$  and $\mathbf{A}=\{A_j\}_{j\in\mathbb{Z}}$ arbitrary but fixed sequences of real values for coupling constants and magnetic fields;  both semicircles of the $j$th graph circle could be associated with the interval $\mathcal{I}_j=(x_j,x_{j+1})$ of the length $\ell_j$ and we require that the following condition holds
 % -------------- %
\begin{equation}\label{cond:l}
\inf_{\ell\in\bm{\ell}}\ell > 0\,,
\end{equation}
 % -------------- %
where $\bell=\{\ell_j\}_{j\in\mathbb{Z}}$.
Consider the corresponding magnetic Laplacian $-\Delta_{\balpha,\bell,\bA}$ on such a chain graph, that is, the operator acting as $-\mathscr{D}_j^2=(\rmi\frac{\rmd}{\rmd x}\pm A)^2$ on the $j$th graph ring with the domain consisting of those functions from the appropriate Sobolev space that satisfy the $\delta$ boundary conditions at the graph vertices,
 % -------------- %
\begin{gather}\label{Delta:cont}
\psi_j(x_j)=\varphi_j(x_j)=
\psi_{j-1}(x_j)=\varphi_{j-1}(x_j)\,,
\\\label{Delta:Kirch}
\mathscr{D}_j\psi_j(x_j)+\mathscr{D}_j\varphi_j(x_j)-\mathscr{D}_{j-1}
\psi_{j-1}(x_j)-\mathscr{D}_{j-1}\varphi_{j-1}(x_j)
=\alpha_j\psi_j(x_j)\,;
\end{gather}
 % -------------- %
the derivatives are naturally meant as one-sided ones. As before the wavefunction component  $\psi_j$ corresponds to the upper halfcircle of the $j$th ring while $\varphi_j$ stands for the lower one. Our aim in this section is to describe a bijective correspondence between the operator $-\Delta_{\balpha,\bell,\mathbf{A}}$ in $L_2(\Gamma)$ and a certain operator in $\ell_2(\mathbb{Z})$. We are going to write a difference equation every bounded (square summable) solution of which gives rise to a bounded (square integrable) solution of the Schr\"{o}dinger equation corresponding to $-\Delta_{\balpha,\bell,\mathbf{A}}$, and vice versa, every bounded (square integrable) solution of the Schr\"{o}dinger equation produces a bounded (square summable) solution of the said difference equation. This connection plays a crucial role in the following sections, where we will study the spectrum of $-\Delta_{\balpha,\bell,\mathbf{A}}$ for specifically chosen sets $\balpha$, $\bell$ and $\mathbf{A}$.

Denote by $\binom{\psi}{\varphi}$ the general solution of the Schr\"{o}dinger equation
 % -------------- %
\[
(-\Delta_{\balpha,\bell,\mathbf{A}}- k^2)
\binom{\psi(x, k)}{\varphi(x, k)}=0,\qquad \Im  k\geq 0\,,
\]
 % -------------- %
which can be rewritten componentwise on the upper and lower semicircles as follows
 % -------------- %
\begin{align}\label{eq:psi&phi}
\begin{aligned}
(-\mathscr{D}_j^2- k^2)\psi(x, k)&=0
\,,\qquad \Im  k\geq 0\,,\quad x\in \mathcal{I}_j\,,
\\[.3em]
(-\mathscr{D}_j^2- k^2)\varphi(x, k)&=0
\,,\qquad \Im  k\geq 0\,,\quad x\in \mathcal{I}_j\,.
\end{aligned}
\end{align}
 % -------------- %
Recalling that the wavefunction should satisfy the continuity condition~\eqref{Delta:cont}, we see that the solutions $\psi$ and $\varphi$ and their quasiderivatives are given by
 % -------------- %
\begin{align}\label{eq:psi}
\begin{aligned}
\psi(x, k)&=
\rme^{\rmi A_j(x_j-x)}
\bigg(
\psi(x_j,k)\cos k(x-x_j)+\mathscr{D}_j\psi(x_j+,k)\,\frac{\sin  k(x-x_j)}{ k}
\bigg)
\,,\\
\mathscr{D}_j\psi(x, k)&=
\rme^{\rmi A_j(x_j-x)}
(
-\psi(x_j,k)\, k\sin k(x-x_j)+\mathscr{D}_j\psi(x_j+,k)\cos k(x-x_j)
)
\end{aligned}
\end{align}
 % -------------- %
and
 % -------------- %
\begin{align}\label{eq:phi}
\begin{aligned}
\varphi(x, k)&=
\rme^{\rmi A_j(x-x_j)}
\bigg(
\varphi(x_j, k)\cos k(x-x_j)+\mathscr{D}_j\varphi(x_j+, k)
\,\frac{\sin  k(x-x_j)}{ k}
\bigg)
\,,\\
\mathscr{D}_j\varphi(x, k)&=
\rme^{\rmi A_j(x-x_j)}
(
-\varphi(x_j, k)\,  k\sin k(x-x_j)+\mathscr{D}_j\varphi(x_j+, k)\cos  k(x-x_j))
\end{aligned}
\end{align}
 % -------------- %
for $\Im k\geq0$, $x\in\mathcal{I}_j$. Next, let us introduce the vector
 % -------------- %
\[
\Psi_j(k,\tau)=
\begin{pmatrix}
\rme^{\tau}\psi(x_j, k)+\rme^{-\tau}\varphi(x_j, k)
\\[0.5em]
\rme^{\tau}\mathscr{D}_{j-1}\psi(x_j-, k)+\rme^{-\tau}\mathscr{D}_{j-1}\varphi(x_j-,k)
\end{pmatrix},
\]
 % -------------- %
and the matrix
 % -------------- %
\[
K(k)=
\begin{pmatrix}
\cos  k\ell_j+\frac{\alpha_j}{2k}\sin  k\ell_j&
\frac1k \sin k\ell_j
\\[0.5em]
- k\sin k\ell_j +\frac{\alpha_j}2\cos k\ell_j
&
\cos k\ell_j
\end{pmatrix}\,,
\]
 % -------------- %
then taking into account condition \eqref{Delta:Kirch}, we conclude that
 % -------------- %
\[
K(k)\Psi_j(k,0)
=
\Psi_{j+1}(k,\rmi A_j\ell_j)\,,
\quad \Im  k \geq 0,\quad j\in\mathbb{Z}\,.
\]
 % -------------- %
In a similar spirit we next introduce the vector
 % -------------- %
\[
\Phi_j( k)=
\begin{pmatrix}
\psi(x_j, k)+\varphi(x_j, k)
\\[0.5em]
\psi(x_{j-1}, k)+\varphi(x_{j-1}, k)
\end{pmatrix}
\]
 % -------------- %
to obtain, by virtue of \eqref{eq:psi} and \eqref{eq:phi}, the relations
 % -------------- %
\begin{align*}
\cos(A_{j-1}\ell_{j-1})\Phi_j( k)
&=
M_{j-1}(k)\Psi_j(k,0)
\\
&=
L_{j-1}( k)\Psi_j( k,\rmi A_{j-1}\ell_{j-1})
\,,
\end{align*}
 % -------------- %
where the matrices are defined as follows
 % -------------- %
\begin{align*}
M_j( k)
&=
\begin{pmatrix}
\cos (A_j\ell_j) & 0
\\[0.5em]
\cos  k\ell_j & -\frac1 k\sin  k\ell_j
\end{pmatrix}\,,
\\
L_j( k)
&=
\begin{pmatrix}
1&0
\\[0.5em]
\cos (A_j\ell_j)\cos k\ell_j
& -\frac1 k\cos (A_j\ell_j)\sin k\ell_j
\end{pmatrix}\,.
\end{align*}
 % -------------- %
Finally, we define the matrix
 % -------------- %
\begin{align*}
N_j(k)
&=
\frac{\cos (A_{j-1}\ell_{j-1})}{\cos (A_j\ell_j)}
L_j(k)K(k)
(M_{j-1}(k))^{-1}
\\
&=
\frac1{\cos (A_{j}\ell_j)}
\begin{pmatrix}
\frac{\alpha}{2k}\sin (k\ell_j)+
\frac{\sin k(\ell_{j-1}+\ell_j)}{\sin (k\ell_{j-1})}
&
-\frac{\sin(k\ell_j)\cos (A_{j-1}\ell_{j-1})}
{\sin(k\ell_{j-1})}
\\[0.5em]
\cos (A_j\ell_j)&0
\end{pmatrix}
\,,
\end{align*}
 % -------------- %
where $k\in\mathfrak{K}:= \{z\colon\Im z\geq0\wedge  z\notin\mathbb{Z}\}$, to get the relation
 % -------------- %
\[
{N}_j( k)\Phi_j( k)=\Phi_{j+1}( k)\,,\qquad  k\in\mathfrak{K}\,,
\]
 % -------------- %
which by continuity of the wavefunction at the graph vertices can be rewritten as
 % -------------- %
\[
{N}_j( k)
\binom{\psi_j( k)}{\psi_{j-1}( k)}
=
\binom{\psi_{j+1}( k)}{\psi_{j}( k)}\,,\qquad  k\in\mathfrak{K}\,,
\]
 % -------------- %
or equivalently as
 % -------------- %
\begin{multline}
\label{discr_relat}
\sin(k\ell_{j-1})\cos (A_j\ell_j)\psi_{j+1}(k)
+
\sin(k\ell_j)\cos (A_{j-1}\ell_{j-1})\psi_{j-1}(k)
\\
=
\Big(
	\frac{\alpha}{2k}\sin(k\ell_{j-1})\sin(k\ell_{j})+
	\sin k(\ell_{j-1}+\ell_{j})
\Big)
\psi_j(k)\,,\qquad  k\in\mathfrak{K}\,,
\end{multline}
 % -------------- %
where $\psi_j( k):=\psi(x_j, k)$. Now we can summ up the above calculations.
 % -------------- %
\begin{thm}\label{thm:Duality}
Suppose that $\alpha_j,\ell_j,A_j\in\mathbb{R}$ for $j\in\mathbb{Z}$ and that \eqref{cond:l} holds. Then any solutions $\psi(x,k)$  and $\varphi(x,k)$, $k^2\in\mathbb{R}$, $k\in\mathfrak{K}$, of~\eqref{eq:psi&phi} satisfy relation~\eqref{discr_relat}. Conversely, any solution of difference equation~\eqref{discr_relat} defines via
 % -------------- %
\begin{align}\label{eq:psi(x)VSpsi(k)}
\begin{aligned}
\psi(x, k)&=
\rme^{\rmi A_j(x_j-x)}
\bigg\{
\psi_j(k)\cos  k(x-x_j)
\\
&\;\;\;\;
+
(\psi_{j+1}(k)\rme^{\rmi A_j\ell_j}-\psi_j(k)\cos k\ell_j)
\frac{\sin  k(x-x_j)}{\sin k\ell_j}
\bigg\}
\,,
\\
\varphi(x, k)&=
\rme^{\rmi A_j(x-x_j)}
\bigg\{
\psi_j(k)\cos  k(x-x_j),
\\
&\;\;\;\;
+
(\psi_{j+1}(k)\rme^{-\rmi A_j\ell_j}-\psi_j(k)\cos k\ell_j)
\frac{\sin  k(x-x_j)}{\sin k\ell_j}
\bigg\}
\,,
\end{aligned}
\end{align}
 % -------------- %
$x\in\mathcal{I}_j$, solutions of equations~\eqref{eq:psi&phi} satisfying $\delta$-coupling conditions~\eqref{Delta:cont},~\eqref{Delta:Kirch}.
In addition,  $\binom{\psi(\cdot, k)}{\varphi(\cdot, k)}\in L_p(\Gamma)$ if and only if $\{\psi_j(k)\}_{j\in\mathbb{Z}}\in \ell_p(\mathbb{Z})$ for $p\in\{2,\infty\}$.
\end{thm}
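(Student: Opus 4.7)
For the forward direction I would start from a pair $\psi,\varphi$ solving~\eqref{eq:psi&phi} together with the $\delta$-coupling~\eqref{Delta:cont}--\eqref{Delta:Kirch}; the computation already sketched in the excerpt essentially does the job. Using the propagation formulae~\eqref{eq:psi} and~\eqref{eq:phi} one expresses $\Psi_{j+1}(k,\rmi A_j\ell_j)=K(k)\Psi_j(k,0)$, and the auxiliary matrices $M_j(k)$ and $L_j(k)$ re-express the sums entering~\eqref{Delta:Kirch} in terms of the vector $\Phi_j(k)$, which by continuity equals $(\psi_{j+1},\psi_j)^\top$ with $\psi_j:=\psi(x_j,k)$. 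Composing them into $N_j(k)$ as defined in the excerpt and reading off the top row of $N_j(k)\Phi_j(k)=\Phi_{j+1}(k)$ yields~\eqref{discr_relat} after clearing the common denominator $\sin(k\ell_{j-1})\cos(A_j\ell_j)$. Since $k\in\mathfrak{K}$ excludes the integers, the factors $\sin k\ell_j$ do not vanish, so every step is reversible. The only delicate point is the bookkeeping of the phases $\rme^{\pm\rmi A_j\ell_j}$ on the two semicircles, whose vector potentials carry opposite orientations.

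For the converse I would take a sequence $\{\psi_j(k)\}_{j\in\mathbb{Z}}$ solving~\eqref{discr_relat} and \emph{define} $\psi(x,k)$ and $\varphi(x,k)$ on each $\mathcal{I}_j$ by~\eqref{eq:psi(x)VSpsi(k)}. That each piece satisfies~\eqref{eq:psi&phi} is immediate, since after factoring out $\rme^{\mp\rmi A_j(x-x_j)}$ the bracket is a combination of $\cos k(x-x_j)$ and $\sin k(x-x_j)$, annihilated by $-\partial_x^2-k^2$. Continuity~\eqref{Delta:cont} is built in: at $x=x_j$ both expressions evaluate to $\psi_j(k)$, while at $x=x_{j+1}$ the internal phases $\rme^{\pm\rmi A_j\ell_j}$ in~\eqref{eq:psi(x)VSpsi(k)} cancel the external $\rme^{\mp\rmi A_j\ell_j}$ and produce $\psi_{j+1}(k)$. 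What remains is the Kirchhoff part~\eqref{Delta:Kirch}: differentiating~\eqref{eq:psi(x)VSpsi(k)} to compute the one-sided quasiderivatives at $x_j$ and substituting, one verifies that~\eqref{Delta:Kirch} is algebraically equivalent to the recurrence~\eqref{discr_relat}, which holds by assumption. This reverse reading of the forward computation is the main calculational step and the place where careful sign- and phase-tracking matters most.

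For the $L_p$-$\ell_p$ correspondence with $p\in\{2,\infty\}$, the formulae~\eqref{eq:psi(x)VSpsi(k)} express $|\psi(x,k)|$ and $|\varphi(x,k)|$ on $\mathcal{I}_j$ as bounded trigonometric combinations of $\psi_j(k)$ and $\psi_{j+1}(k)$. Invoking~\eqref{cond:l} and $k\in\mathfrak{K}$ (so that $\sin k\ell_j\neq 0$), I would derive two-sided estimates
\[
c_j\bigl(|\psi_j(k)|^2+|\psi_{j+1}(k)|^2\bigr)\le \|\psi(\cdot,k)\|_{L_2(\mathcal{I}_j)}^2+\|\varphi(\cdot,k)\|_{L_2(\mathcal{I}_j)}^2\le C_j\bigl(|\psi_j(k)|^2+|\psi_{j+1}(k)|^2\bigr),
\]
together with an analogous sandwich for the essential supremum in terms of $\max(|\psi_j(k)|,|\psi_{j+1}(k)|)$. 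Summing over $j\in\mathbb{Z}$ yields both implications. The main subtlety is uniformity of the constants $c_j,C_j$ in $j$; the lower bound on $\ell_j$ from~\eqref{cond:l} handles the lower estimate, while the upper estimate also requires an implicit upper bound on $\ell_j$, which in the applications of later sections is guaranteed by the compactly supported or periodic nature of the perturbations from a fixed background.
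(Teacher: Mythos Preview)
Your treatment of the forward and converse directions matches the paper's: both delegate these to the matrix computations preceding the theorem, and you correctly note that $k\in\mathfrak{K}$ keeps all the $\sin k\ell_j$ factors nonzero so that the steps are reversible.

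For the $L_p$--$\ell_p$ equivalence you take a different route from the paper. You propose two-sided norm estimates with constants $c_j,C_j$ and then worry (rightly) about their uniformity in~$j$, which forces you to import an upper bound on $\ell_j$ that is not part of the hypotheses. The paper instead uses pointwise \emph{identities}: it first observes that $\psi,\varphi\in L_p$ implies $\mathscr{D}_j\psi,\mathscr{D}_j\varphi\in L_p$ via the equation, and then invokes the constancy of the ``Pr\"ufer-type'' quantity
\[
(\psi(x,k))^2+\tfrac{1}{k^2}(\mathscr{D}_j\psi(x,k))^2
\]
(modulo the unimodular phase) along each edge, together with the reconstruction formula~\eqref{eq:psi(x)VSpsi(k)}, to pass in both directions for $p=2$; for $p=\infty$ it uses the explicit solution formula directly. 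These identities replace your inequalities with equalities, so the lower bound \eqref{cond:l} alone suffices and no extraneous upper bound on $\ell_j$ is needed. Your approach is not wrong, but as you yourself flag, it proves a slightly weaker statement than the one claimed; the paper's identity-based argument closes that gap.
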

 % -------------- %
\begin{proof}
It remains to prove the last statement (in both directions). Let $ k^2\in\mathbb{R}$, $ k\in\mathfrak{K}$, and assume all the solutions $\psi(\cdot, k)$, $\varphi(\cdot, k)$ and $\psi_j( k)$ to be real. If $\psi,\varphi\in L_p(\mathbb{R})$ and thus $\mathscr{D}_j^2\psi,\mathscr{D}_j^2\varphi\in L_p(\mathbb{R})$ we infer $\mathscr{D}_j\psi,\mathscr{D}_j\varphi\in L_p(\mathbb{R})$ for all $1\leq p\leq\infty$ and $j\in\mathbb{Z}$. Then $\{\psi_j(k)\}_{j\in\mathbb{Z}}\in \ell_p(\mathbb{Z})$ follows from
 % -------------- %
\[
\psi(x_j, k)=\rme^{\rmi A_j(x-x_j)}
\bigg(
\psi(x, k)\cos k(x-x_j)-\mathscr{D}_j\psi(x, k)\,
\frac{\sin k(x-x_j)} k
\bigg)
\,,
\]
 % -------------- %
$x\in\mathcal{I}_j$, for $p=\infty$, and from
 % -------------- %
\begin{align*}
&(\psi(x_j, k))^2+\frac1{ k^2}(\mathscr{D}_j\psi(x_j^+, k))^2
\\
&\quad
=\rme^{2\rmi A_j(x-x_j)}
\bigg(
(\psi(x, k))^2+\frac1{ k^2}
(\mathscr{D}_j\psi(x, k))^2
\bigg)\,,\qquad x\in\mathcal{I}_j\,,
\end{align*}
 % -------------- %
for $p=2$. Conversely, assume $\{\psi_j( k)\}_{j\in\mathbb{Z}}\in \ell_p(\mathbb{Z})$ holds for $p=\infty$ or $p=2$. The case $p=\infty$ directly results from \eqref{eq:psi(x)VSpsi(k)} and the case $p=2$ follows from \eqref{eq:psi(x)VSpsi(k)} in combination with
 % -------------- %
\begin{align*}
&\rme^{2\rmi A_j(x-x_j)}
\bigg(
(\psi(x, k))^2+\frac1{ k^2}
(\mathscr{D}_j\psi(x, k))^2
\bigg)
\\
&\quad
=
(\psi(x_j, k))^2+
    \bigg(
        \frac{\psi_{j+1}( k)\rme^{\rmi A_j\pi}-\psi_j( k)\cos k\pi}{\sin k\pi}
    \bigg)^2
\,,\qquad x\in\mathcal{I}_j\,;
\end{align*}
 % -------------- %
this completes the proof.
\end{proof}

Before proceeding further, let us note that using Theorem~\ref{thm:Duality} one can rewrite formul{\ae} for probability current on the $j$th circle in terms of $\psi_j(k)$, specifically
 % -------------- %
\begin{align*}
J_\psi(k)&=
\frac{2k}{\sin k\pi}\big\{\Re\psi_j(k)
(\Re\psi_{j+1}(k)\sin A_j\pi
+
\Im\psi_{j+1}(k)\cos A_j\pi)
\\
&\;\;\;\;
-
\Im\psi_j(k)
(\Re\psi_{j+1}(k)\cos A_j\pi
-
\Im\psi_{j+1}(k)\sin A_j\pi)\big\}\,,
\\
J_\varphi(k)&=-\frac{2k}{\sin k\pi}\big\{
\Re\psi_j(k)
(\Re\psi_{j+1}(k)\sin A_j\pi
-
\Im\psi_{j+1}(k)\cos A_j\pi\big)
\\
&\;\;\;\;
-
\Im\psi_j(k)
(\Re\psi_{j+1}(k)\cos A_j\pi
+
\Im\psi_{j+1}(k)\sin A_j\pi)\big\}\,.
\end{align*}
 % -------------- %

 % -------------- %
\begin{rems}\label{rems}
\begin{itemize}
\item[(i)] Since we consider the above formul{\ae}  for $k\in\mathfrak{K}$, the corresponding denominators never vanish.
On the other hand, integer values of $k$ lead to a `compact' sequences $\{\psi_j(k)\}_{j\in\mathbb{Z}}$ with a finite number of nonzero elements.

\item[(ii)] In the nonmagnetic case when $A_j\in\mathbb{Z}$, the probability currents on upper and lower edges with the same vertices are equal, namely,
 % -------------- %
\[
J_\psi(k)=J_\varphi(k)=
\frac{2k\cos A_j\pi}{\sin k\pi}
\big(
\Re\psi_j(k)\Im\psi_{j+1}(k)
-\Re\psi_{j+1}(k)\Im\psi_j(k)\big)\,.
\]
 % -------------- %
Note that this is related to the symmetry of the $\delta$ interaction involved. If we replace it by an asymmetric coupling, interesting and [
possibly important `switching' patterns between the upper and lower parts of the graph may occur \cite{CP15}. As we have seen in the previous section, a nontrivial magnetic field can lead to another asymmetry.

\item[(iii)] In what follows we will demonstrate that for those $k$ which produce $\ell_2$-sequences $\{\psi_j(k)\}_{j\in\mathbb{Z}}$, the latter can be chosen to be real, whence the probability currents read as follows
 % -------------- %
\begin{align*}
J_\psi(k)&=2k\psi_j(k)\psi_{j+1}(k)\frac{\sin A_j\pi}{\sin k\pi}\,,
\\
J_\varphi(k)&=-2k\psi_j(k)\psi_{j+1}(k)\frac{\sin  A_j\pi}{\sin k\pi}\,.
\end{align*}
 % -------------- %
In other words, since the coefficients $\psi_j(k)\psi_{j+1}(k)$ decay for a fixed $k$ as $|j|\to\infty$, the probability current is `circling' around such localized solutions.
\end{itemize}
\end{rems}
 % -------------- %

%%%%%%%%%%%%%%%%%%%%%%%%%%%%%%%%%%%%%%%%%%%%%%%%%
\section{Local perturbations of periodic systems}
\label{s:local}

After these preliminaries, let us pass to our proper topic. Let the Hamiltonian of the periodic system $-\Delta_{\alpha,A}$ be defined as in Sec.~\ref{sec:prelim}, and suppose that the system suffers compactly supported perturbations of different types. In particular, we consider systems with modified values of the magnetic field on two adjacent rings, systems with modified values of the coupling constant and the magnetic field on a fixed ring, and finally, systems with one ring of a different length. Our goal here is to relate spectral properties of perturbed and periodic Hamiltonians.

Let us start with the essential spectrum. We note that if we decouple the straight chain by changing the matching condition to separation ones, the essential spectrum will not be affected as the resolvents of corresponding differ a finite-rank perturbation. This also means that the essential spectrum of the halfchains is again the same, just its multiplicity is one instead of two. Since we shall consider perturbations of a compact support, we may cut such a perturbed chain at two points at both sides of perturbation support. Irrespective of the type of the perturbation, the middle part spectrum is discrete, hence since cuts are associated with a finite rank, the essential spectrum is again the same as for the straight chain. Our aim now is to look into the properties of the discrete spectrum of each perturbed operator.

\subsection{Perturbations of the magnetic field}
Let us consider a periodic system with parameters $\balpha=\{\dots,\alpha,\alpha,\dots\}$, $\bell=\{\dots,\pi,\pi,\dots\}$ and $\bA=\{\dots,A,A,\dots\}$, and suppose that on the indicated couple of neighboring rings we modify the values of the magnetic field, say, to $A_1$ and $A_2$. Without loss of generality we may employ a ring numbering starting from the chosen pair labeling it by the indices $j=1,2$. The perturbed Hamiltonian representing the magnetic field $A$ on each ring of the chain except the first two and $A_1,\,A_2$ on the first and second rings, respectively, will be then denoted by $-\Delta_{ A_1, A_2}$. Our goal in this subsection is to find spectral properties of this operator.

According to our circle numeration the matrices $N_j$ from the proof of Theorem~\ref{thm:Duality} take the form
 % -------------- %
\begin{eqnarray*}
N_1(k)&=
\begin{pmatrix}
2\xi_1(k)
 & -\frac{\cos A\pi}{\cos A_1\pi}
\\[5pt]
1 & 0
\end{pmatrix}
\,,
\quad &\xi_1(k)=\frac{1}{\cos A_1\pi}
\Big(\cos k\pi + \frac{\alpha}{4k}\sin k\pi\Big)\,, \\
N_2(k)&=
\begin{pmatrix}
2\xi_2(k)& -\frac{\cos A_1\pi}{\cos A_2\pi}
\\[2pt]
1 & 0
\end{pmatrix}
\,,
    \quad &\xi_2(k)=\frac{1}{\cos A_2\pi}
\Big(\cos k\pi + \frac{\alpha}{4k}\sin k\pi\Big)\,, \\
N_3(k)&=
\begin{pmatrix}
2\xi(k)& -\frac{\cos B_2\pi}{\cos A\pi}
\\[5pt]
1 & 0
\end{pmatrix}
\,,
    \qquad &
\end{eqnarray*}
 % -------------- %
while for $j\in\mathbb{Z}\setminus\{1,2,3\}$ one infers that  $N_j(k)=N(k)$, where
 % -------------- %
\[
N(k):=
\begin{pmatrix}
2\xi(k)& -1
\\[2pt]
1 & 0
\end{pmatrix}
\]
 % -------------- %
with $\xi(k)$ being the quantity that appeared in \eqref{squareEq}. As before we have the relations
 % -------------- %
\[
\Phi_{j+1}(k)
=
N_j(k)
\Phi_j(k)
\,,\qquad j\in\mathbb{Z}\,,
\]
 % -------------- %
from which it follows that
 % -------------- %
\begin{alignat}{2}
        \label{eq:PhiJViaPhi4}
\Phi_{j+4}(k) &= (N(k))^j\Phi_4(k)\,,
    &\qquad j &\in\mathbb{N}\,, \\
        \label{eq:Phi4ViaPhi1}
\Phi_4(k) &= N_3(k)N_2(k)N_1(k)\Phi_1(k)\,,
    && \\
        \label{eq:PhiJViaPhi1}
\Phi_{j+1}(k)     &= (N(k))^j\Phi_1(k)\,,
    &\qquad -j&\in\mathbb{N}\,.
\end{alignat}
 % -------------- %
It is clear that the asymptotical behavior of the norms of $\Phi_j$ is determined by spectral properties of the matrix $N$. Specifically, let $\Phi_4$ be an eigenvector of $N$ corresponding to an eigenvalue $\mu$, then $|\mu| < 1$ (or $|\mu|>1$, $|\mu|=1$) means that $\|\Phi_j\|$ decays exponentially with respect to $j>4$ (respectively, it is exponentially growing, or independent of $j$). At the same time if $\Phi_1$ if an eigenvector of $N$ corresponding to an eigenvalue $\mu$ such that $|\mu| > 1$, then $\|\Phi_j\|$ decays exponentially with respect to $j<1$ (with similar conclusions for $|\mu|<1$ and $|\mu|=1$).

By virtue of Theorem~\ref{thm:Duality} the wavefunction components on the $j$-th ring are determined by $\Phi_j$, and thus, in view of \eqref{eq:PhiJViaPhi4} and \eqref{eq:PhiJViaPhi1}, by $\Phi_4$ or $\Phi_1$ depending on the sign of $j$. If $\Phi_4$ has a non-vanishing component related to an eigenvalue of $N$ of modulus larger than $1$, or $\Phi_1$ has a non-vanishing component related to an eigenvalue of modulus less than $1$, then the corresponding coefficients $\Phi_j$ determine neither an eigenfunction nor a generalized eigenfunction of $-\Delta_{A_1,A_2}$. On the other hand, if $\Phi_4$ is an eigenvector, or a linear combination of eigenvectors, of the matrix $N$ with modulus less than one (respectively, equal to one), and at the same time $\Phi_1$ is an eigenvector, or a linear combination of eigenvectors, of the matrix $N$ with modulus larger than one (respectively, equal to one), then the coefficients $\Phi_j$ determine an eigenfunction (respectively, a generalized eigenfunction) and the corresponding energy $E$ belongs to the point (respectively, continuous) spectrum of the operator $-\Delta_{A_1,A_2}$. To perform the spectral analysis of $N(k)$, we employ its characteristic polynomial at energy $k^2$,
 % -------------- %
\begin{equation}\label{charPoly}
\lambda^2-2\xi(k)\lambda+1;
\end{equation}
 % -------------- %
it shows that $N(k)$ has an eigenvalue of modulus less than one \emph{iff} the discriminant of \eqref{charPoly} is positive, i.e.
 % -------------- %
\[
|\xi(k)|>1\,,
\]
 % -------------- %
and a pair of complex conjugated eigenvalues of modulus one \emph{iff} the above quantity is less than or equal to one. In the former case the eigenvalues of $N(k)$ are given by
 % -------------- %
\begin{equation}\label{eq:eigenvalues}
\lambda_{1,2}(k)=
\xi(k)
\pm\sqrt{\xi(k)^2-1},
\end{equation}
 % -------------- %
satisfying $\lambda_2=\lambda_1^{-1}$, hence $|\lambda_2|<1$ holds if $\xi(k)>1$ and $|\lambda_1|<1$ if this quantity is $<-1$. Moreover, the corresponding eigenvectors of $N(k)$ are
 % -------------- %
\begin{equation}\label{eigenFunc}
    u_{1,2}(k) = \binom{1}{\lambda_{2,1}(k)}\,.
\end{equation}
 % -------------- %
It is convenient to abbreviate
 % -------------- %
\begin{align*}
\lambda_*(k)&=\xi(k)-\sgn(\xi(k))\sqrt{\xi(k)^2-1}\,,
\\
\lambda^*(k)&=\xi(k)+\sgn(\xi(k))\sqrt{\xi(k)^2-1}\,.
\end{align*}
 % -------------- %
The above functions coincide with $\lambda_1(k)$ and $\lambda_2(k)$ if $\xi(k)<-1$ or with $\lambda_2(k)$ and $\lambda_1(k)$ if $\xi(k)>1$, hence the
absolute value of $\lambda^*(k)$ is greater than 1 while the absolute value of $\lambda_*(k)$ is less than 1 unless $k^2\in\sigma(-\Delta_{\alpha,A})$. We also introduce the vectors
 % -------------- %
\[
u_*(k)=
\binom{1}{\lambda^*(k)}
\quad\text{and}\quad
u^*=
\binom{1}{\lambda_*(k)}\,,
\]
 % -------------- %
which play the role of the corresponding eigenvectors.

 % -------------- %
\begin{prop}
Assume that $k^2\in\mathbb{R}\setminus\sigma(-\Delta_{\alpha,A})$; then $k^2$ is an eigenvalue of $-\Delta_{A_1,A_2}$
iff the relation
 % -------------- %
\begin{equation}\label{cond:evExist}
\xi(k)\lambda^*(k)=
\frac{(\cos A_1\pi)^2+(\cos A_2\pi)^2}{2(\cos A\pi)^2}
\,.
\end{equation}
 % -------------- %
is valid for this $k$.
\end{prop}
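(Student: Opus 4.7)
The plan is to translate the eigenvalue problem for $-\Delta_{A_1,A_2}$, via Theorem~\ref{thm:Duality}, into a question about $\ell_2$-solutions $\{\Phi_j\}_{j\in\mathbb{Z}}$ of the discrete recurrence, and then exploit the transfer-matrix structure already developed. Since $k^2\notin\sigma(-\Delta_{\alpha,A})$ forces $|\xi(k)|>1$, the matrix $N(k)$ has a contracting eigenvalue $\lambda_*(k)$ with eigenvector $u_*(k)=\binom{1}{\lambda^*(k)}$ and an expanding eigenvalue $\lambda^*(k)$ with eigenvector $u^*(k)=\binom{1}{\lambda_*(k)}$. Consequently, \eqref{eq:PhiJViaPhi4} forces $\Phi_4$ to be a scalar multiple of $u_*(k)$ (so that the tail at $+\infty$ is summable), while \eqref{eq:PhiJViaPhi1} forces $\Phi_1$ to be a scalar multiple of $u^*(k)$ (for summability at $-\infty$).

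Combining these two requirements with the bridge relation \eqref{eq:Phi4ViaPhi1}, the existence of a nontrivial $\ell_2$-solution amounts to the matching condition that $N_3(k)N_2(k)N_1(k)\,u^*(k)$ be proportional to $u_*(k)$. Writing this image as $\binom{a}{b}$, the proportionality $\binom{a}{b}\parallel\binom{1}{\lambda^*(k)}$ reduces to the single scalar equation $b=a\lambda^*(k)$, equivalently $a-b\lambda_*(k)=0$ after using $\lambda_*(k)\lambda^*(k)=1$.

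The remaining task is a direct computation: apply $N_1$, $N_2$, $N_3$ in turn to $u^*(k)$ to obtain explicit expressions for $a$ and $b$ as polynomials in $\xi,\xi_1,\xi_2$ and the cosine ratios, then simplify using the identity $\xi_j(k)\cos(A_j\pi)=\xi(k)\cos(A\pi)$, which is immediate from the definitions of $\xi_j$ and $\xi$, together with the characteristic relation $\lambda_*(k)^2=2\xi(k)\lambda_*(k)-1$, which linearises every power of $\lambda_*$. After these substitutions the expression $a-b\lambda_*$ collapses so that the dependence on the perturbation enters only through the combination $(\cos A_1\pi)^2+(\cos A_2\pi)^2$; a final application of $2\xi\lambda^*-1=(\lambda^*)^2$ and a division by the common factor yields precisely the claimed identity~\eqref{cond:evExist}.

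I expect the concluding algebraic simplification to be the main technical hurdle, as the triple matrix product together with the elimination of $\lambda_*^2$ produces several cancellations that must be tracked carefully. The conceptual content of the argument, by contrast, is the observation that $\ell_2$ decay at $\pm\infty$ singles out one eigendirection of $N(k)$ on each side, so that the eigenvalue condition reduces to the single scalar requirement that $N_3 N_2 N_1$ map the unstable direction at $j=1$ onto the stable direction at $j=4$.
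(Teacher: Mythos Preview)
Your proposal is correct and follows essentially the same route as the paper's proof. The paper phrases the matching condition as the vanishing of the determinant $\det[N_3(k)N_2(k)N_1(k)\,u^*(k),\,u_*(k)]=0$, which is exactly your proportionality requirement $N_3N_2N_1 u^*\parallel u_*$; it then computes the full product $N_3N_2N_1$ before applying it to $u^*$, whereas you propose applying the matrices in turn, but the simplification proceeds via the same two identities you name, $\xi_j\cos A_j\pi=\xi\cos A\pi$ and $(\lambda^*)^2=2\xi\lambda^*-1$ (the paper also invokes $|\lambda^*|>1$ to fix the sign of the square root at the end).
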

 % -------------- %
\begin{proof}
Observe that by virtue of Theorem~\ref{thm:Duality} and formul{\ae}~\eqref{eq:PhiJViaPhi4} and \eqref{eq:PhiJViaPhi1} the only possibility to construct an eigenfunction of $-\Delta_{A_1,A_2}$ is by demanding that
 % -------------- %
\[
\Phi_4\sim u_*
\quad\text{and}\quad
\Phi_1\sim u^*\,.
\]
 % -------------- %
We conclude from relation~\eqref{eq:Phi4ViaPhi1} that $k^2$ is an eigenvalue of $-\Delta_{A_1,A_2}$ \emph{iff}
 % -------------- %
\[
\det[N_3(k)N_2(k)N_1(k) u^*(k),u_*(k)]=0\,,
\]
 % -------------- %
where the symbol $[a,b]$ here and in the following is used for a $2\times2$ matrix with the columns $a$ and $b$. Taking into account the explicit structure of the matrices $N_j$, we obtain that
 % -------------- %
\[
N_3N_2N_1=
\begin{pmatrix}
8\xi\xi_1\xi_2
-
2\xi
\Big(
\frac{\cos B_1\pi}{\cos B_2\pi}
+
\frac{\cos B_2\pi}{\cos B_1\pi}
\Big)
&-4\xi_1\xi_2+\frac{\cos B_2\pi}{\cos B_1\pi}
\\[5pt]
4\xi_1\xi_2-\frac{\cos B_1\pi}{\cos B_2\pi}
&-2\xi_1\frac{\cos A\pi}{\cos B_2\pi}
\end{pmatrix}.
\]
 % -------------- %
Exploiting the identities $\xi_j=\xi\frac{\cos A\pi}{\cos A_j\pi}$, $(\lambda^*)^2=2\xi\lambda^*-1$ and the fact that $|\lambda^*|>1$, one can easily rewrite the above determinant in the form of \eqref{cond:evExist}, completing thus the proof of the proposition.
\end{proof}

Recall that we consider those $k^2$ from the real line that are not in the spectrum of $-\Delta_{\alpha,A}$. For such values of the energy the function $\xi\lambda^*$ is greater than one, hence the equation~\eqref{cond:evExist} admits no solution unless
 % -------------- %
\begin{equation}\label{NeccSuff}
\frac{(\cos A_1\pi)^2 + (\cos A_2\pi)^2}{2(\cos A\pi)^2} > 1\,.
\end{equation}
 % -------------- %
Let us note that condition~\eqref{NeccSuff} is also sufficient for eigenvalue existence. In fact, if it holds, then the right-hand side of~\eqref{cond:evExist} takes values in $(1,\frac{1}{(\cos A\pi)^2}]$. Exploiting the explicit structure of the functions $\xi$, we see that the interval $(1,\frac{1}{(\cos A\pi)^2}]$ is in the range of $\xi\lambda^*$ on every spectral gap, since
 % -------------- %
\[
(\xi\lambda^*)(n)=\frac{1}{(\cos A\pi)^2}+
\frac{1}{|\cos A\pi|}
\sqrt{\frac{1}{(\cos A\pi)^2}-1}
>
\frac{1}{(\cos A\pi)^2}\,.
\]
 % -------------- %
Thus we get the following claim.
 % -------------- %
\begin{thm}
The spectrum of $-\Delta_{A_1,A_2}$ coincides with the one for $-\Delta_{\alpha,A}$ unless condition~\eqref{NeccSuff} holds. On the other hand, if~\eqref{NeccSuff} holds, then the essential spectrum of $-\Delta_{A_1,A_2}$ coincides with $\sigma(-\Delta_{\alpha,A})$, and moreover, $-\Delta_{A_1,A_2}$ has precisely one simple eigenvalue in every gap of its essential spectrum.
\end{thm}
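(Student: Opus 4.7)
The plan is to assemble the theorem from three ingredients: identification of the essential spectrum of $-\Delta_{A_1,A_2}$, reduction of the eigenvalue problem to the equation \eqref{cond:evExist} via the preceding Proposition, and a careful analysis of $\xi\lambda^*$ on each gap of $\sigma(-\Delta_{\alpha,A})$.

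First I would handle the essential spectrum. The operator $-\Delta_{A_1,A_2}$ differs from $-\Delta_{\alpha,A}$ only through the boundary conditions reflecting the modified magnetic field on two neighboring rings, so by the decoupling argument sketched at the start of Section~\ref{s:local}—cutting the graph at two vertices flanking the perturbation region gives a finite-rank resolvent perturbation of both operators—Weyl's theorem yields $\sigma_{\mathrm{ess}}(-\Delta_{A_1,A_2})=\sigma(-\Delta_{\alpha,A})$. The remaining task is to determine the discrete spectrum in the gaps, which by the preceding Proposition reduces to counting solutions of \eqref{cond:evExist}.

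Next I would study $f(k):=\xi(k)\lambda^*(k)=\xi(k)^2+|\xi(k)|\sqrt{\xi(k)^2-1}$ on a fixed gap. Two observations drive the argument. On any gap $|\xi(k)|>1$, hence $f(k)>1$; this already proves the first claim, since if \eqref{NeccSuff} fails then the right-hand side of \eqref{cond:evExist} is at most $1$, so no eigenvalue can arise. When \eqref{NeccSuff} holds, the right-hand side lies in $(1,(\cos A\pi)^{-2}]$. On each gap, $f$ is continuous, takes the value $1$ at any band-edge endpoint (where $|\xi|=1$), and attains the value $(\cos A\pi)^{-2}+|\cos A\pi|^{-1}\sqrt{(\cos A\pi)^{-2}-1}>(\cos A\pi)^{-2}$ at any endpoint with $k\in\mathbb{N}$, by the calculation displayed just before the theorem. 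The intermediate value theorem then produces at least one solution of \eqref{cond:evExist} per gap.

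For uniqueness and simplicity I would argue as follows. Since $t\mapsto t^2+t\sqrt{t^2-1}$ is strictly increasing on $(1,\infty)$, strict monotonicity of $f$ on each gap reduces to strict monotonicity of $|\xi|$ there; this I would verify by computing $\xi'(k)$ from the explicit formula for $\xi$ and ruling out its zeros whenever $|\xi|>1$. Simplicity of the eigenvalue is then built in: by Theorem~\ref{thm:Duality} together with \eqref{eq:PhiJViaPhi4}, \eqref{eq:PhiJViaPhi1} and \eqref{eq:Phi4ViaPhi1}, any eigenvector at such $k^2$ is forced to satisfy $\Phi_4\parallel u_*(k)$ and $\Phi_1\parallel u^*(k)$, which together fix the eigenfunction up to a single multiplicative constant. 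The main obstacle I anticipate is the monotonicity of $|\xi|$ on each gap; it should follow from the observation that $\xi$ cannot have an interior critical point with $|\xi|>1$, since such a point would be a strict local extremum of $\xi$ with modulus exceeding one, inconsistent with the boundary behaviour of $\xi$ on the gap.
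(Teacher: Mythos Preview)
Your proposal is correct and tracks the paper's argument closely: the paper handles the essential spectrum by the decoupling remark at the start of Section~\ref{s:local}, reduces the eigenvalue problem to \eqref{cond:evExist} via the preceding Proposition, observes that $\xi\lambda^*>1$ on every gap to dispose of the case when \eqref{NeccSuff} fails, and invokes the intermediate value theorem together with the displayed value $(\xi\lambda^*)(n)>(\cos A\pi)^{-2}$ for existence when \eqref{NeccSuff} holds. The paper actually states uniqueness and simplicity without proof, so your monotonicity and one-dimensionality arguments go beyond what the paper supplies; note, however, that your boundary-behaviour sketch for monotonicity does not by itself rule out $|\xi|$ overshooting $1/|\cos A\pi|$ before reaching the endpoint $k\in\mathbb{N}$, so you will need either a direct computation from the explicit formula for $\xi$ or the standard fact that the Floquet discriminant is strictly monotone between consecutive band edges to close that step.
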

 % -------------- %

As a direct consequence of the theorem is the following
 % -------------- %
\begin{cor}
Suppose that some $A_j$, say, $A_2$, equals $A$. The spectrum of $-\Delta_{A_1}:=-\Delta_{A_1,A}$ coincides with the one for $-\Delta_{\alpha,A}$ unless the following condition holds
 % -------------- %
\[
\frac{|\cos A_1\pi|}{|\cos A\pi|}>1\,.
\]
 % -------------- %
On the other hand, if this is the case, then the essential spectrum of $-\Delta_{A_1}$ coincides with $\sigma(-\Delta_{\alpha,A})$ and, moreover, $-\Delta_{A_1}$ has precisely one simple eigenvalue in every gap of its essential spectrum.
\end{cor}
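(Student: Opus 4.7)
The plan is simply to specialize the preceding theorem to the case $A_2 = A$ and simplify the resulting condition \eqref{NeccSuff}. Since the operator $-\Delta_{A_1}$ is by definition $-\Delta_{A_1,A}$, all conclusions of the theorem apply verbatim; only the criterion \eqref{NeccSuff} needs to be rewritten in a cleaner form.

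Substituting $A_2 = A$ into \eqref{NeccSuff} gives
\[
\frac{(\cos A_1\pi)^2 + (\cos A\pi)^2}{2(\cos A\pi)^2} > 1\,,
\]
which, after multiplying both sides by $2(\cos A\pi)^2$ (a positive quantity, since $A-\tfrac12\notin\mathbb{Z}$ throughout this section) and cancelling, is equivalent to
\[
(\cos A_1\pi)^2 > (\cos A\pi)^2\,,
\]
and hence to $|\cos A_1\pi|/|\cos A\pi| > 1$. Thus the dichotomy stated in the theorem carries over: when this inequality fails, the two spectra coincide; when it holds, the essential spectrum is preserved and exactly one simple eigenvalue appears in every gap.

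There is no real obstacle here beyond the routine algebraic check above; the corollary is a direct specialization. The only point worth noting is that the assumption $A-\tfrac12\notin\mathbb{Z}$ (needed to ensure $\cos A\pi\neq 0$ so that $\xi$ and $\xi_j$ are well defined and the Floquet analysis of Sec.~\ref{sec:prelim} applies) is tacitly in force; under this assumption the simplification of \eqref{NeccSuff} is straightforward and the claim follows.
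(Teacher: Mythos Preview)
Your proof is correct and matches the paper's approach exactly: the paper states the corollary as a ``direct consequence of the theorem'' without further argument, and your specialization $A_2=A$ together with the elementary simplification of \eqref{NeccSuff} is precisely what is intended.
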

 % -------------- %

\subsection{Mixed perturbations}
We again start with a periodic system and suppose that on a certain circle one has a new value of the magnetic field as well as the `left' coupling constant, denoted by $A_1$ and $\alpha_1$ respectively; we introduce ring numbering starting from the chosen one. The Hamiltonian representing this mixed perturbation will be denoted by $-\Delta_{\alpha_1, A_1}$ and our goal in this subsection is to demonstrate spectral properties of $-\Delta_{\alpha_1,A_1}$.

As above we employ the recurrent relations $\Phi_{j+1}=N_j\Phi_j$ with the matrices $N_j$ defined as follows
 % -------------- %
\begin{align*}
N_1(k)&=
\begin{pmatrix}
2\xi_1(k)
 & -\frac{\cos A\pi}{\cos A_1\pi}
\\[5pt]
1 & 0
\end{pmatrix}
\,,
\\
N_2(k)&=
\begin{pmatrix}
2\xi(k)& -\frac{\cos A_1\pi}{\cos A\pi}
\\[5pt]
1 & 0
\end{pmatrix}
\,,
\qquad
N_j(k)=N(k)
\end{align*}
 % -------------- %
for $j\in\mathbb{Z}\setminus\{1,2\}$. Here
 % -------------- %
\[
\xi_1(k)=\frac{1}{\cos A_1\pi}
\Big(
	\cos k\pi
	+ \frac{\alpha_1}{4k}\sin k\pi
\Big)\,.
\]
 % -------------- %
Using the same ideas as in the previous subsection we conclude that the characteristic equation is of the form $\det[N_2(k)N_1(k)u^*(k),u_*(k)]=0$. Thus substituting the explicit expression for the matrices $N_j$ and the function $\xi_1$ we find that $k^2$ is an eigenvalue of $-\Delta_{\alpha_1,A_1}$ if and only if
 % -------------- %
\begin{equation}
\alpha_1-\alpha=
\frac{2k\cos A\pi}{\sin k\pi}
\bigg(
\lambda_*(k)
\bigg(\frac{\cos A_1\pi}{\cos A\pi}\bigg)^2
-2\sgn(\xi(k))\sqrt{(\xi(k))^2-1}
-\lambda_*(k)
\bigg)\,.
\end{equation}
 % -------------- %
holds for this $k$. Let $f$ stand for the right-hand side of the above relation. Then, as $k^2$ varies from the lower end of a gap in $-\Delta_{\alpha,A}$ to the upper end, $f(k)$ is continuous with respect to $k$ and strictly increasing with respect to~$k^2$. In particular, if $|\cos A_1\pi|>|\cos A\pi|$, $f(k)$ alternately increases from $-\infty$ to some positive number or from some negative number to $\infty$, starting with the increase from $-\infty$ in the first gap (the one below the continuum spectrum threshold). The sequence of local maxima (that are positive) is increasing with respect to the gap number, at the same time, the sequence of local minima (that are negative) is decreasing. On the other hand, for $|\cos A_1\pi|<|\cos A\pi|$, $f(k)$ alternately increases from $-\infty$ to some negative number or from some positive number to $\infty$, starting with the increase from $-\infty$ in the first gap. In this case the sequence of local maxima (that are negative) is decreasing with respect to the gap number, while the sequence of local minima (that are positive) is increasing.
 % -------------- %
\begin{thm}
For $|\cos A_1\pi|>|\cos A\pi|$ and $\alpha_1>\alpha$, the operator $-\Delta_{\alpha_1,A_1}$ has precisely one simple eigenvalue in every gap of its essential spectrum, except possibly a finite number of odd gaps. On the other hand, for $|\cos A_1\pi|>|\cos A\pi|$ and $\alpha_1<\alpha$, it has precisely one simple eigenvalue in every gap of its essential spectrum, except possibly a finite number of even gaps. In particular, for sufficiently small $\alpha_1-\alpha$ positive or negative, there is an eigenvalue in every gap. For $|\cos A_1\pi|<|\cos A\pi|$ and $\alpha_1>\alpha$, there is precisely one simple impurity state in a finite number of even gaps; for $|\cos A_1\pi|<|\cos A\pi|$ and $\alpha_1<\alpha$, there is precisely one simple impurity state in a finite number of odd gaps. In particular, for sufficiently small $\alpha_1-\alpha$ positive or negative, the operator $-\Delta_{\alpha_1,A_1}$ has no eigenvalues.
\end{thm}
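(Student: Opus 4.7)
The plan is to start from the eigenvalue equation derived just above the statement, namely $\alpha_1-\alpha=f(k)$ with $f$ the right-hand side displayed there, and to count its solutions gap by gap. Simplicity of each eigenvalue will follow automatically, since the construction $\Phi_1\sim u^*(k)$, $\Phi_2\sim u_*(k)$ used in the derivation fixes the localized profile up to a scalar multiple.

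First I would verify the three shape properties of $f$ that drive everything: on every bounded gap of $\sigma(-\Delta_{\alpha,A})$ the map $k\mapsto f(k)$ is continuous and strictly monotone in $k^2$; it diverges to $\pm\infty$ at the endpoint where $\sin k\pi=0$ (that is, at an integer $k$); and it stays finite at the endpoint where $|\xi(k)|=1$. The last point is immediate, for at such a point $\lambda_*=\sgn\xi$ and $\sqrt{\xi(k)^2-1}=0$, so $f$ collapses to a positive multiple of $\cos^2 A_1\pi-\cos^2 A\pi$. The sign of the finite-end value of $f$ on a given gap is therefore controlled by the sign of $|\cos A_1\pi|-|\cos A\pi|$, while the sign of the infinite limit alternates with the gap number, exactly as recorded in the text preceding the theorem.

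Combining these two sign observations yields the shape of the range of $f$ on each gap. If $|\cos A_1\pi|>|\cos A\pi|$, then on an odd-numbered gap the range is $(-\infty,M_n)$ with $M_n>0$ and $M_n\to+\infty$, while on an even-numbered gap the range is $(m_n,+\infty)$ with $m_n<0$ and $m_n\to-\infty$; if $|\cos A_1\pi|<|\cos A\pi|$, the same shapes appear but with $M_n<0$ decreasing to $-\infty$ and $m_n>0$ increasing to $+\infty$. By strict monotonicity, in each gap the equation $\alpha_1-\alpha=f(k)$ has at most one solution, and existence is equivalent to $\alpha_1-\alpha$ lying in the corresponding interval.

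Collating the six sub-cases of the theorem is then pure book-keeping. For $|\cos A_1\pi|>|\cos A\pi|$ and $\alpha_1>\alpha$, every even gap automatically contributes an eigenvalue, while an odd gap does so iff $\alpha_1-\alpha<M_n$, which fails only for finitely many $n$ since $M_n\to+\infty$; the case $\alpha_1<\alpha$ is mirror-symmetric. For $|\alpha_1-\alpha|<\min(M_1,|m_1|)$ one obtains an eigenvalue in every gap. For $|\cos A_1\pi|<|\cos A\pi|$ the dual analysis shows that eigenvalues occur only in finitely many gaps whose parity is fixed by $\sgn(\alpha_1-\alpha)$, and none at all once $|\alpha_1-\alpha|<\min(|M_1|,m_2)$. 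The main technical obstacle is the preparatory analytic lemma---the strict $k^2$-monotonicity of $f$ together with the sign and monotonicity of $M_n$, $m_n$ as functions of the gap number---which amounts to differentiating the combination $\lambda_*(k)(\cos A_1\pi/\cos A\pi)^2-2\sgn(\xi(k))\sqrt{\xi(k)^2-1}-\lambda_*(k)$ in $k^2$ and checking its sign uniformly on each gap; once this is in hand, the rest of the argument is elementary.
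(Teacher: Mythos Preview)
Your proposal is correct and follows essentially the same route as the paper. The paper's argument, given in the paragraph immediately preceding the theorem, also reduces everything to the equation $\alpha_1-\alpha=f(k)$ and then invokes exactly the shape properties you list: continuity and strict monotonicity of $f$ in $k^2$ on each gap, divergence to $\pm\infty$ at one gap endpoint, a finite value at the other endpoint whose sign is governed by $|\cos A_1\pi|-|\cos A\pi|$, and the monotone behavior in the gap index of these finite endpoint values; the case analysis is then the same book-keeping you describe. The only difference is cosmetic---you make explicit the simplicity argument and flag the monotonicity verification as the analytic core, whereas the paper asserts these properties without further justification.
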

 % -------------- %

\subsection{Perturbations of geometry}

In the final subsection of the section we assume that the periodic system presented by $-\Delta_{\alpha,A}$ suffers a geometric perturbation. Specifically, we suppose that the first ring of the chain is rescaled in such a way that his length changes to $2\ell$, the equal distances between the two vertices being preserved, while the other characteristics of the system such as coupling constants and magnetic fields remain the same. Denoting for the sake of brevity the corresponding Hamiltonian by $-\Delta_\ell$, we are going to show that its spectral properties depend, in particular, on whether the scaling transformation is a contraction ($\ell<\pi$) or dilatation ($\ell>\pi$), as well as on the sign of the coupling constant.

Using the same reasoning as in the previous subsections, we first find the matrices $N_j$ entering the recurrence relations $\Phi_{j+1}=N_j\Phi_j$. For the perturbation in question we have
 % -------------- %
\[
N_1(k)=
\begin{pmatrix}
2\xi_1(k)
 & -\frac{\sin k\ell\cos A\pi}{\sin k\pi\cos A\ell}
\\[5pt]
1 & 0
\end{pmatrix}
\,,
\quad
N_2(k)=
\begin{pmatrix}
2\xi_2(k)& -\frac{\sin k\pi\cos A\ell}{\sin k\ell\cos A\pi}
\\[5pt]
1 & 0
\end{pmatrix}
\,,
\]
 % -------------- %
with
 % -------------- %
\begin{align*}
\xi_1(k)&=\frac{1}{\cos A\ell}
\bigg(
	\frac{\sin k(\pi+\ell)}{2\sin k\pi}
	+ \frac{\alpha}{4k}\sin k\ell
\bigg)\,,
\\[.3em]
\xi_2(k)&=\frac{1}{\cos A\pi}
\bigg(
	\frac{\sin k(\pi+\ell)}{2\sin k\ell}
	+ \frac{\alpha}{4k}\sin k\pi
\bigg)\,,
\end{align*}
 % -------------- %
while for $j\in\mathbb{Z}\setminus\{1,2\}$, we get $N_j(k)=N(k)$. Furthermore, the eigenvalues of $-\Delta_\ell$ are determined by the characteristic equation
 % -------------- %
\[
\det[N_2(k)N_1(k) u^*(k),u_*(k)]=0\,.
\]
 % -------------- %
Taking into account the explicit structure of the matrices $N_j$, we find that
 % -------------- %
\[
N_2(k)N_1(k)=
\begin{pmatrix}
4(\xi_2(k))^2
\frac{\sin k\ell\cos A\pi}{\sin k\pi\cos A\ell}
-
\frac{\sin k\pi\cos A\ell}{\sin k\ell\cos A\pi}
&-2\xi_2(k)
\frac{\sin k\ell\cos A\pi}{\sin k\pi\cos A\ell}
\\[5pt]
2\xi_2(k)\frac{\sin k\ell\cos A\pi}{\sin k\pi\cos A\ell}
&-
\frac{\sin k\ell\cos A\pi}{\sin k\pi\cos A\ell}
\end{pmatrix}\,,
\]
 % -------------- %
from which we conclude that $k^2\in\mathbb{R}\setminus\sigma(-\Delta_{\alpha,A})$ is an eigenvalue of $-\Delta_\ell$ if and only if
 % -------------- %
\begin{equation}\label{geo}
\Big|2\xi_1(k)-\lambda_*(k)
\frac{\sin k\ell\cos A\pi}{\sin k\pi\cos A\ell}
\Big|=1\,.
\end{equation}
 % -------------- %
We restrict ourselves to discussing solutions of equation~\eqref{geo} below the continuum spectrum threshold, i.e. in the first spectral gap of $-\Delta_{\alpha,A}$. As $k^2$ varies from the lower end of a gap to the upper end, the left-hand side of~\eqref{geo} is continuous with respect
to $k$. Suppose first that $\alpha$ is negative, then the left-hand side of~\eqref{geo} is strictly decreasing with respect to $k^2$, and in the case of contraction (dilatation) its minimal value is less (respectively, greater) than one, hence one obtains one (respectively, no) solution to equation~\eqref{geo}. Assume next $\alpha>0$. In the case of a contraction the left-hand side of~\eqref{geo} is strictly decreasing with respect to $k^2$ and its local minimum is greater than one. At the same time, in the case of a dilatation the function first strictly decreases to its local minimum the value of which is less than one, and then it increase to its local maximum. Moreover, for a fixed natural $n$ and a sufficiently large $\ell$ the function on the left-hand side of~\eqref{geo} has exactly $n$ local minima and maxima with the following properties: all the minimum values are zero, while the sequence of maximum values is strictly decreasing and greater than one. This means, in particular, that for such an $\ell$ one obtains $2n$ eigenvalues in the first spectral gaps.

Denote by $\sharp_\ell$ the function counting eigenvalues of $-\Delta_\ell$ in the first spectral gap of its continuous spectrum.
 % -------------- %
\begin{thm}
For $\alpha>0$ and $\ell\in(0,\pi)$, we have $\sharp_\ell=0$. On the other hand, if $\alpha>0$ and $\ell>\pi$, then $\sharp_\ell\geq 1$, and moreover, $\sharp_\ell\to\infty$ holds as $\ell\to\infty$. At the same time, for $\alpha<0$ and $\ell\in(0,\pi)$ we have $\sharp_\ell=1$. If $\alpha<0$ and $\ell>\pi$, then $\sharp_\ell=0$.
\end{thm}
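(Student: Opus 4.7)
The plan is to reduce the counting problem to the scalar transcendental equation \eqref{geo} and to track the behaviour of the function
\[
F_\ell(k) := 2\xi_1(k)-\lambda_*(k)\,\frac{\sin k\ell\,\cos A\pi}{\sin k\pi\,\cos A\ell}
\]
across the first spectral gap of $-\Delta_{\alpha,A}$. Three ingredients are used throughout: (a) on the first gap $\lambda_*(k)$ is real, of modulus strictly less than one in the interior, with $|\lambda_*|\to 1$ at the band edges; (b) the explicit form of $\xi_1(k)$ given before the theorem gives smooth control of the envelope; (c) Theorem~\ref{thm:spectrumUnperturbed} fixes the location of the first gap, which on account of the sign of $\alpha$ either sits on the negative half-line or straddles zero.

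First I would handle the negative-$\alpha$ cases. Because the first gap contains a negative-energy part, I substitute $k=\mathrm{i}\kappa$ with $\kappa>0$ and rewrite $F_\ell(\mathrm{i}\kappa)$ in terms of hyperbolic functions; a direct differentiation then shows that $F_\ell$ is strictly monotone in $\kappa^2$. Reading off the limiting values at the two band edges, one sees that for a contraction $\ell<\pi$ the range of $F_\ell$ crosses the boundary of $[-1,1]$ exactly once and so $\sharp_\ell=1$, whereas for a dilation $\ell>\pi$ the oscillatory factor $\sin k\ell/\sin k\pi$ enters with the opposite sign and keeps $|F_\ell|>1$ throughout the gap, giving $\sharp_\ell=0$.

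For $\alpha>0$ and $\ell<\pi$ the analogous monotonicity argument shows that $F_\ell$ does not enter $[-1,1]$ on the first gap, hence $\sharp_\ell=0$. The delicate case is $\alpha>0$, $\ell>\pi$: here $F_\ell$ ceases to be monotone because $F_\ell'$ picks up additional zeros coming from $\sin k\ell/\sin k\pi$. I would locate these critical points, verify that local minima of $F_\ell$ are all equal to $0$ (by identifying them with zeros of $F_\ell$ itself using the quadratic relation $(\lambda^*)^2=2\xi\lambda^*-1$), and verify that the local maxima form a strictly decreasing sequence all exceeding $1$. Each such maximum contributes two solutions of $|F_\ell|=1$; since the number of full oscillations of $\sin k\ell$ on the fixed first gap grows linearly with $\ell$, this yields $\sharp_\ell\to\infty$. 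A continuous deformation from the unperturbed $\ell=\pi$ shows that at least one local maximum above $1$ persists for every $\ell>\pi$, whence $\sharp_\ell\geq 1$.

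The main obstacle is the non-monotonic regime $\alpha>0$, $\ell>\pi$. Separating the slowly varying envelope $2\xi_1$ from the rapidly oscillating piece in a way that makes the critical-point values computable seems to require a slightly clever rewriting of $F_\ell$, so that the identities ``local minimum value $=0$'' and ``local maximum value $>1$'' emerge from elementary trigonometric identities rather than from case-by-case differentiation. Everything else (the monotone cases, the behaviour at band edges, and the asymptotics $\ell\to\infty$) is essentially routine once the monotonicity and the boundary values are pinned down.
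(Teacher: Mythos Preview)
Your proposal follows the paper's approach essentially verbatim: the paper also reduces to the scalar condition \eqref{geo}, asserts strict monotonicity of $|F_\ell|$ on the first gap in the three ``easy'' cases (with the appropriate boundary values dictating $0$ or $1$ crossings), and in the remaining case $\alpha>0,\ \ell>\pi$ analyses the critical points, claiming that all local minima of $|F_\ell|$ equal zero while the local maxima form a strictly decreasing sequence exceeding one, so that each oscillation contributes two eigenvalues and $\sharp_\ell\to\infty$. Your extra remarks (the hyperbolic substitution $k=\mathrm{i}\kappa$, the use of $(\lambda^*)^2=2\xi\lambda^*-1$ to pin the minimum values, and the continuity-in-$\ell$ argument for $\sharp_\ell\ge1$) flesh out steps the paper leaves as bare assertions, but do not change the strategy.
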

 % -------------- %

%%%%%%%%%%%%%%%%%%%%%%%%%%%%%%%%%%%%%%%%%%%%%%%%
\section{Weak perturbations of periodic systems}
\label{s:weak}

As before let $-\Delta_{\alpha,A}$ stands for the Hamiltonian of the periodic system described in Sec.~\ref{sec:prelim}. Now we are going to discuss situations when the system suffers some weak perturbations. Specifically, we suppose that the parameters of the system are of the form $A+\varepsilon A_j$ and $\alpha+\varepsilon\alpha_j$ with $\varepsilon\in(0,1)$ and ask about the spectrum of the perturbed Hamiltonian in the asymptotic regime $\varepsilon\to 0$. First we focus on perturbations supported on a compact subdomain of the graph and demonstrate the behavior of the eigenvalues in the spectral gaps of the periodic operator. Next we turn to systems where the perturbation is also periodic and show that in this situation a version of the well-known Saxon-Hutner conjecture is valid.

\subsection{Local perturbations of magnetic fields and coupling constants}

Our aim here is to compare spectral properties of $-\Delta_{\alpha,A}$ with those produced by a weak finite-rank perturbation. To be specific, we suppose that the coupling constant perturbation strength is $\varepsilon\alpha_j,\: j = 1,\dots,n$, at the vertices with the coordinates $\{\pi,\dots,n\pi\}$, and at the same time, the magnetic field suffers a weak perturbation, namely for the ring indices $j$ from $\{1,\dots,n\}$ we have an `additional' magnetic potential $\varepsilon A_j$. The perturbation is controlled by the small parameter $\varepsilon$ and the perturbed Hamiltonian will be denoted by $-\Delta_\varepsilon$. In view of the compact support, the essential spectrum is not affected as one can check using the argument used in the opening of Sec.~\ref{s:local}.

We are going to demonstrate that, as $\varepsilon\to0+$, the presence of the eigenvalue in the gap of $\sigma_\mathrm{ess}(-\Delta_\varepsilon)$ is determined by the signs of $\sum_{j=1}^n\alpha_j$ and $\sum_{j=1}^n A_j$ as well as of $\cot A\pi$. With this aim in mind, we mimic the argument from the previous section which yields the relation
 % -------------- %
\[
\Phi_{j+1}(k)
=
N_j(k,\varepsilon)
\Phi_j(k)
\,,\qquad j\in\mathbb{Z}\,,
\]
 % -------------- %
where the matrix $N_j$ takes the form
 % -------------- %
\begin{equation*}
N_j(k,\varepsilon)
=
\begin{pmatrix}
2\xi_j(k,\varepsilon)
&
-\frac{\cos (A_{j-1}\pi)}
{\cos (A_{j}\pi)}
\\[0.5em]
1&0
\end{pmatrix}
\end{equation*}
 % -------------- %
with
 % -------------- %
\[
\xi_j(k,\varepsilon)=\frac{1}{\cos(A+\varepsilon A_j)\pi}
\bigg(
	\cos k\pi+(\alpha+\varepsilon\alpha_j)\frac{\sin k\pi}{4k}
\bigg)\,.
\]
 % -------------- %
Note that the above matrix admits the following asymptotic expansion
 % -------------- %
\begin{equation*}
N_j
=N+\varepsilon\pi\tan A\pi M_j+\OO(\varepsilon^2)
\end{equation*}
 % -------------- %
as $\varepsilon\to0$, where
 % -------------- %
\[
N(k)=
\begin{pmatrix}
2\xi(k) & -1
\\[5pt]
1 & 0
\end{pmatrix}
\quad\text{and}\quad
M_j(k)=
\begin{pmatrix}
2A_j\xi(k)+\frac{\alpha_j\sin k\pi}{2\pi k\sin A\pi} & A_{j-1}-A_j
\\[5pt]
0 & 0
\end{pmatrix}
\]
 % -------------- %
for $j\in\mathbb{Z}$, we just put $A_j=0$ for $j\in\mathbb{Z}\setminus\{1,2,\ldots,n\}$. We are going to combine these relations with
 % -------------- %
\begin{alignat}{2}
        \label{eq:PhiJViaPhi2}
\Phi_{n+2+j}(k) &= (N(k))^j\Phi_{n+2}(k)\,,
    &\qquad j &\in\mathbb{N}\,, \\[.3em]
        \label{eq:Phi2ViaPhi0}
\Phi_{n+2}(k) &= \underbrace{N_{n+1}(k)N_n(k)\ldots N_1(k)}_{\mathscr{N}_n(k)}\Phi_1(k)\,,
    && \\[.3em]
        \label{eq:PhiJViaPhi0}
\Phi_{j+1}(k)     &= (N(k))^j\Phi_1(k)\,,
    &\qquad -j&\in\mathbb{N}\,.
\end{alignat}
 % -------------- %
It is clear that the asymptotic behavior of the norms of the vectors $\Phi_j$ is determined by spectral properties of the matrix $N$.

Note that in view of Theorem~\ref{thm:Duality} and formul{\ae}~\eqref{eq:PhiJViaPhi2} and \eqref{eq:PhiJViaPhi0}
the only possibility to obtain an $L^2$ eigenfunction of $-\Delta_{\varepsilon}$ is by demanding that
 % -------------- %
\[
\Phi_{n+2}\sim
\binom{1}{\lambda^*}
\quad\text{and}\quad
\Phi_1\sim
\binom{1}{\lambda_*}\,,
\]
 % -------------- %
thus we conclude from relation~\eqref{eq:Phi2ViaPhi0} that $k^2$ is an eigenvalue of $-\Delta_{\varepsilon}$ \emph{iff}
 % -------------- %
\begin{equation}\label{determinant}
\det\bigg[\mathscr{N}_{n}(k)\binom{1}{\lambda_*(k)},\binom{1}{\lambda^*(k)}\bigg]=0\,.
\end{equation}
 % -------------- %
Now we observe that in the limit $\varepsilon\to0$ the product $\mathscr{N}_n$ behaves as
 % -------------- %
\[
\mathscr{N}_n(k)=
(N(k))^{n+1}+\varepsilon\pi\tan A\pi
	\sum_{j=0}^{n}(N(k))^jM_{n+1-j}(k)(N(k))^{n-j}+\OO(\varepsilon^2)\,,
\]
 % -------------- %
and using than condition \eqref{determinant}, one gets
 % -------------- %
\begin{multline*}
\det[\mathscr{N}_n(k)u^*(k),u_*(k)]=
(\lambda^*(k))^{n+1}
\det[u^*(k),u_*(k)]
\\
+(\lambda^*(k))^n\varepsilon\pi\tan A\pi\sum_{j=1}^{n+1}
\det[M_j(k)u^*(k),u_*(k)]
+\OO(\varepsilon^2)\,.
\end{multline*}
 % -------------- %
It is easy to see that $\det[u^*,u_*]=\lambda^*-\lambda_*$, and moreover,
 % -------------- %
\[
\det[M_ju^*,u_*]=\lambda^*
\Big(
\lambda^*A_j+\lambda_*A_{j-1}+\frac{\alpha_j\sin k\pi}{2k\pi\sin A\pi}
\Big)\,,
\]
 % -------------- %
hence that the characteristic determinant takes asymptotically the form
 % -------------- %
\[
\lambda^*-\lambda_*+2\varepsilon\xi(k)\pi\tan A\pi
\sum_{j=1}^n
\bigg(
A_j+
\frac{\alpha_j\sin k\pi}{4k\pi\xi(k)\sin A\pi}
\bigg)
+\OO(\varepsilon^2)
\]
 % -------------- %
Finally, observe that the eigenvalues are given via solution to the equation
 % -------------- %
\begin{equation}\label{eq:charSmall}
\varepsilon g(k)+\OO(\varepsilon^2)=f(k)\,,
\end{equation}
 % -------------- %
where
 % -------------- %
\[
f(k)=-\frac{\cot A\pi}{\pi}\sqrt{1-\frac{1}{\xi(k)^2}}\,,
\quad
g(k)=\sum_{j=1}^n
	A_j+\frac{\sin k\pi}{4k\pi\xi(k)\sin A\pi}
	\sum_{j=1}^n\alpha_j
\,.
\]
 % -------------- %
Obviously, the sign of function $f$ is determined by the number $\sgn(-\cot A\pi)$. Furthermore, as $k^2$ varies from the lower end of a gap in $\sigma(-\Delta_{\alpha,A})$ to its upper end, $f(k)$ is continuous with respect to $k$ and strictly monotonous with respect to $k^2$, and $f(k)$ tends to zero as $\xi(k)\to1$, the value being attained at the spectrum threshold.

At the same time, the function $g$ has the following properties: $g(k)$ is continuous with respect to $k$ as $k^2$ varies from the left infinity to the right infinity of the real line, outside a countable set of the second order jumps $\{k\in\mathbb{C}_+\colon \xi(k)=0\}$. Furthermore, $g(k)$ is strictly increasing (decreasing) with respect to $k^2$ if $\cot A\pi\sum_{j=1}^n\alpha_j>0$ (respectively, $\cot A\pi\sum_{j=1}^n\alpha_j<0$).
Note that $g(k)\to\sum_{j=1}^n A_j$ holds as $k^2\to-\infty$, as well as that $g(k)=\sum_{j=1}^n A_j$ for any non-negative integer $k$ and that these point are inflection points of the function $g$. Finally, consider some  fixed neighborhood of a $k\in\mathbb{N}$, then the slope of the function $g$ in this neighborhood tends to zero as $\mathbb{N}\ni k\to\infty$.

The listed properties of the functions $f$ and $g$ allow us to state the following result describing eigenvalues in the gaps.
 % -------------- %
\begin{thm}
Assume that $\cot A\pi\sum_{j=1}^n A_j >0$. If $\sum_{j=1}^n\alpha_j>0$, the operator $-\Delta_\varepsilon$ has no eigenvalues as $\varepsilon\to0+$ except in a finite number of even gaps, where it can have one eigenvalue per gap. Similarly, for $\sum_{j=1}^n\alpha_j<0$ and $-\Delta_\varepsilon$ it has no eigenvalues except in a finite number of odd gaps, where it can have one eigenvalue per gap. In particular, for a sufficiently large $|\sum_{j=1}^n A_j|$, the operator has no eigenvalues, while for a sufficiently small  $|\sum_{j=1}^n A_j|$ and $\sum_{j=1}^n\alpha_j>0$ (respectively, $\sum_{j=1}^n\alpha_j<0$) it has an eigenvalue in the second (respectively, the first) gap.

On the other hand, let $\cot A\pi\sum_{j=1}^n A_j<0$. If $\sum_{j=1}^n\alpha_j>0$, the operator $-\Delta_\varepsilon$ has precisely one simple eigenvalue as $\varepsilon\to0$ in every gap of its essential spectrum except possibly a finite number of odd gaps. If $\sum_{j=1}^n\alpha_j<0$, the operator has precisely one simple eigenvalue in every gap of its essential spectrum except possibly a finite number of even gaps. In particular, for a sufficiently large $|\sum_{j=1}^n A_j|$, the operator has an eigenvalue in every gap, while for a sufficiently small  $|\sum_{j=1}^n A_j|$ and $\sum_{j=1}^n\alpha_j>0$ (respectively, $\sum_{j=1}^n\alpha_j<0$) it has no eigenvalue in the first (respectively, the second) gap.
\end{thm}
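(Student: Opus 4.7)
My plan is to analyze the implicit equation \eqref{eq:charSmall}, $\varepsilon g(k)+\OO(\varepsilon^2)=f(k)$, gap by gap, using the structural information on $f$ and $g$ recorded just before the statement. Every eigenvalue of $-\Delta_\varepsilon$ lying in a gap of $\sigma(-\Delta_{\alpha,A})$ corresponds to a solution of this equation, so it suffices to count solutions in each gap.

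First I would pin down the qualitative shape of both sides across the $n$-th gap. The function $f$ vanishes at the two band edges where $|\xi|=1$, keeps constant sign $-\sgn(\cot A\pi)$ throughout, and attains its extremum at the flat-band integer point $k=n$ with $|f(n)|=|\cos A\pi|/\pi$, a bound independent of $n$ (since $\xi(n)=(-1)^n/\cos A\pi$). The jumps of $g$ at $\xi=0$ occur inside bands, not gaps, hence $g$ is continuous and strictly monotone in $k^2$ on each entire gap, with direction of monotonicity given by $\sgn\bigl(\cot A\pi\sum_{j=1}^n\alpha_j\bigr)$; moreover $g(n)=\sum_{j=1}^n A_j$, and the slope of $g$ at this inflection point $k=n$ tends to $0$ as $n\to\infty$.

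Given this, both assertions reduce to a sign and intermediate-value argument. When $\cot A\pi\sum_{j=1}^n A_j<0$, the asymptotic value $\varepsilon\sum_{j=1}^n A_j$ of $\varepsilon g$ shares the sign of $f$; since $|f(n)|=|\cos A\pi|/\pi$ while $|\varepsilon g|$ is only of order $\varepsilon$ in the bulk of the gap, for $\varepsilon$ small enough the two graphs must cross, and strict monotonicity of $g$ then yields exactly one simple eigenvalue per gap. The finitely many exceptional gaps are those in which the monotonic $\alpha_j$-contribution to $g$ pulls $g$ sufficiently far from $\sum A_j$ in the sign-breaking direction; which parity of gaps is spared comes from the alternation of $\sgn\xi$ between consecutive gaps combined with $\sgn\sum\alpha_j$. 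When instead $\cot A\pi\sum_{j=1}^n A_j>0$, the bulk of $\varepsilon g$ sits on the opposite side of $0$ from $f$, and a crossing is possible only if the $\alpha_j$-contribution drags $g$ across $0$ inside the gap. Because the slope of $g$ at $k=n$ shrinks to $0$ as $n\to\infty$ while $|f|$ is bounded below by the same constant in every gap, this is possible only in finitely many gaps, again of a definite parity; and the thresholds on $|\sum A_j|$ quoted in the statement are read off from the explicit uniform value $|f(n)|=|\cos A\pi|/\pi$.

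The genuine obstacle, as I see it, will be to make the $\OO(\varepsilon^2)$ remainder in \eqref{eq:charSmall} \emph{uniform in the gap index $n$}, which is exactly what the phrases ``except possibly a finite number of gaps'' and ``for sufficiently small $|\sum A_j|$'' actually require. I would address this by reconstructing the expansion of $\mathscr{N}_n(k)$ leading to \eqref{eq:charSmall} with explicit remainder bounds, plugging these into the determinant \eqref{determinant}, and estimating the tail by a polynomial in $|\xi(k)|$ and $|\lambda^*(k)|$. The former is uniformly bounded on the union of all gaps by $1/|\cos A\pi|$; the latter is bounded on the part of each gap where $|\xi|$ stays away from $1$, which is precisely where the solutions of \eqref{eq:charSmall} can lie for small $\varepsilon$. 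Combining these estimates with the uniform extremum bound on $|f|$ then converts the qualitative sign-and-monotonicity picture above into the rigorous parity-sensitive count asserted in the theorem.
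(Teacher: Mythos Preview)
Your approach is essentially the same as the paper's: the paper derives the characteristic equation \eqref{eq:charSmall} and records the listed properties of $f$ and $g$, then simply states that ``the listed properties of the functions $f$ and $g$ allow us to state the following result'' without giving any further argument. Your qualitative sign--and--intermediate-value analysis is exactly what the paper leaves to the reader, and you supply more detail than the paper does.

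One discrepancy worth noting: the paper asserts that $f$ is \emph{strictly monotone} in each gap, whereas you say it attains an extremum at the integer point $k=n$. For any gap bounded by two band edges (so $|\xi|=1$ and hence $f=0$ at both endpoints), your description is the correct one; the paper's wording is accurate only for the semi-infinite first gap. This does not affect the counting argument, but it does mean your picture of $f$ is sharper than the text you are relying on.

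Your identification of the uniformity of the $\OO(\varepsilon^2)$ remainder as the main technical obstacle, and your proposed control via the uniform bound $|\xi|\le 1/|\cos A\pi|$ on gaps, goes beyond what the paper addresses; the paper simply takes this for granted.
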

 % -------------- %

The theorem does not cover particular situations when some of the perturbations has zero mean. Let us focus on them and suppose first that  $\sum_{j=1}^n \alpha_j=0$, then the characteristic equation~\eqref{eq:charSmall} reads
 % -------------- %
\[
-\frac{\cot A\pi}{\pi}\sqrt{1-\frac{1}{\xi(k)^2}}
=\varepsilon\sum_{j=1}^n
A_j+\OO(\varepsilon^2)\,,\qquad \varepsilon\to0\,.
\]
 % -------------- %
We see that for a sufficiently small positive $\varepsilon$ the necessary and sufficient condition for eigenvalue existence takes the form
 % -------------- %
\begin{equation}\label{NeccSuff:eps}
\sgn\bigg(\sum_{j=1}^nA_j\bigg)=-\sgn(\cot A\pi)\,.
\end{equation}
 % -------------- %

 % -------------- %
\begin{thm}
Suppose that $\sum_{j=1}^n\alpha_j=0$. The spectrum of $-\Delta_{\varepsilon}$ coincides with the one for $-\Delta_{\alpha,A}$ unless condition~\eqref{NeccSuff:eps} holds. If this is the case, then the essential spectrum of $-\Delta_{\varepsilon}$ coincides with $\sigma(-\Delta_{\alpha,A})$, and moreover, $-\Delta_{\varepsilon}$ has precisely one simple eigenvalue in every gap of its essential spectrum.
\end{thm}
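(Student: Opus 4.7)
The plan is to follow the template set by the previous theorems of this subsection, exploiting the drastic simplification produced by $\sum_j\alpha_j=0$. For the essential spectrum, I rely on the same cut-and-decouple argument invoked at the opening of Sec.~\ref{s:local}: the perturbation differs from $-\Delta_{\alpha,A}$ only on a compact part of $\Gamma$, so the resolvents differ by a finite-rank operator and $\sigma_{\mathrm{ess}}(-\Delta_\varepsilon)=\sigma(-\Delta_{\alpha,A})$. The infinitely degenerate eigenvalues at $k^2$ with $k\in\mathbb{N}$ persist since the compactly supported eigenfunctions living on pairs of adjacent rings outside the perturbed region are still eigenfunctions of $-\Delta_\varepsilon$, supplying an infinite-dimensional eigenspace.

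For the discrete spectrum in the gaps I use the characteristic equation~\eqref{eq:charSmall} already derived in this subsection. With $\sum_j\alpha_j=0$ the function $g(k)$ collapses to the constant $\sum_j A_j$, so $k^2\in\mathbb{R}\setminus\sigma(-\Delta_{\alpha,A})$ is an eigenvalue of $-\Delta_\varepsilon$ iff
\[
f(k)=\varepsilon\sum_{j=1}^n A_j+\OO(\varepsilon^2),\qquad f(k):=-\frac{\cot A\pi}{\pi}\sqrt{1-\xi(k)^{-2}}\,.
\]
By Theorem~\ref{thm:Duality} every such $k$ yields an $\ell_2$-sequence $\{\psi_j(k)\}$ and thence an $L_2$ eigenfunction of the operator.

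The structural observation that drives the count is that each discriminant gap $\{k:|\xi(k)|>1\}$ contains exactly one integer $k=n$ in its interior (where $|\xi(n)|=1/|\cos A\pi|$), and this point carries an infinitely degenerate eigenvalue of $-\Delta_{\alpha,A}$. Hence a discriminant gap is split into \emph{two} gaps of the essential spectrum, one on either side of $n$. On each such half-gap $\xi$ keeps a constant sign and $|\xi|$ is strictly monotone from $1$ at the band-edge endpoint to $1/|\cos A\pi|$ at the integer endpoint; therefore $f$ is continuous and strictly monotone there, vanishes at the band edge, and limits to $-\sgn(\cot A\pi)\,|\cos A\pi|/\pi$ at the integer endpoint. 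Thus $f$ has the fixed sign $-\sgn(\cot A\pi)$ throughout the open half-gap; a similar description, with endpoint $-\infty$ replacing the integer endpoint, applies to the unbounded first gap (the limiting value of $f$ being $-\sgn(\cot A\pi)/\pi$).

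Assembling these ingredients gives the theorem. For $\varepsilon$ small enough the right-hand side satisfies $|\varepsilon\sum_j A_j+\OO(\varepsilon^2)|<|\cos A\pi|/\pi$, so solvability on each half-gap reduces to a sign comparison: a solution exists iff $\sgn(\sum_j A_j)=-\sgn(\cot A\pi)$, which is exactly condition~\eqref{NeccSuff:eps}. When it fails, the equation has no solutions in any half-gap and $\sigma(-\Delta_\varepsilon)=\sigma(-\Delta_{\alpha,A})$; when it holds, strict monotonicity of $f$ delivers exactly one solution per half-gap, hence one simple eigenvalue in every gap of $\sigma_{\mathrm{ess}}(-\Delta_\varepsilon)$. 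Simplicity follows because the solution $k$ forces $\Phi_1$ to be proportional to $u^*$ and $\Phi_{n+2}$ to be proportional to $u_*$ (the only $\ell_2$ boundary behaviours at $\mp\infty$), leaving a one-dimensional eigenspace. The main conceptual obstacle is to recognize that the correct gaps to count are the \emph{halves} of discriminant gaps separated by the degenerate eigenvalues at integer $k$; once this and the one-sided monotonicity of $\xi$ are in hand, controlling the $\OO(\varepsilon^2)$ remainder uniformly on compact subsets of each half-gap is routine smoothness bookkeeping.
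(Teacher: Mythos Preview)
Your proposal is correct and follows essentially the same approach as the paper: specialize the characteristic equation~\eqref{eq:charSmall} using $\sum_j\alpha_j=0$ so that $g(k)\equiv\sum_j A_j$, then read off the sign condition~\eqref{NeccSuff:eps} from the fixed sign of $f$ and use its monotonicity on each gap to get exactly one eigenvalue. You supply more detail than the paper (the half-gap structure around the integers $k\in\mathbb{N}$, the explicit range of $f$, the simplicity argument), all of which is sound; the only slip is the endpoint value on the unbounded first gap, where $\lim_{k^2\to-\infty}f(k)=-\cot(A\pi)/\pi$ rather than $-\sgn(\cot A\pi)/\pi$, but since $|\cot A\pi|/\pi>|\cos A\pi|/\pi$ this does not affect your small-$\varepsilon$ comparison.
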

 % -------------- %

\noindent On the other hand, for $\sum_{j=1}^n A_j=0$ the characteristic equation~\eqref{eq:charSmall} reads
 % -------------- %
\begin{equation}\label{m>2}
\varepsilon\sum_{j=1}^n\alpha_j+\OO(\varepsilon^2)=-\sgn(\xi(k))\frac{4k\cos A\pi}{\sin k\pi}
\sqrt{(\xi(k))^2-1}\,,\qquad\varepsilon\to0\,.
\end{equation}
 % -------------- %
As $k^2$ varies from the lower end of a gap in $\sigma(-\Delta_{\alpha,A})$ to its upper end, the right-hand side of this relation is continuous with respect to $k$ and strictly increasing with respect to $k^2$. In particular, it alternately increases from $-\infty$ to zero or from zero to $\infty$, starting with the increase from $-\infty$ to zero in the first gap, i.e. the one below the continuous spectrum threshold.
 % -------------- %
\begin{thm}
Suppose that $\sum_{j=1}^n A_j=0$. For any $\varepsilon\in(0,1)$ the essential spectrum of $-\Delta_{\varepsilon}$ coincides with that of $-\Delta_{\alpha,A}$. Assume that $\sum_{j=1}^n\alpha_j<0$, then the operator $-\Delta_{\varepsilon}$ has exactly one simple impurity state in every odd gap of its essential spectrum for $\varepsilon\to0+$. If the sum $\sum_{j=1}^n\alpha_j$ is positive, then it has exactly one simple impurity state in every even gap of its essential spectrum as $\varepsilon\to0$.
\end{thm}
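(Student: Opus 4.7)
The plan is as follows. For the essential-spectrum claim, I would invoke exactly the argument given at the opening of Section~\ref{s:local}: the perturbation modifies the coupling constants and the magnetic vector potential only on finitely many rings, so cutting the graph at two decoupling points on either side of the perturbation support produces a finite-rank resolvent difference, and neither the cutting nor the local modifications change $\sigma_{\mathrm{ess}}$. Hence $\sigma_{\mathrm{ess}}(-\Delta_\varepsilon) = \sigma(-\Delta_{\alpha,A})$ for every $\varepsilon \in (0,1)$.

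For the discrete-spectrum claims I would start from the characteristic equation \eqref{m>2}, already derived. Abbreviate its right-hand side by
\[
h(k) := -\sgn(\xi(k))\,\frac{4k\cos A\pi}{\sin k\pi}\sqrt{\xi(k)^2-1},
\]
so that \eqref{m>2} becomes $\varepsilon\sum_{j=1}^n\alpha_j + \OO(\varepsilon^2) = h(k)$. The paragraph preceding the theorem already records the crucial monotonicity of $h$ on each spectral gap of $-\Delta_{\alpha,A}$: as $k^2$ sweeps across a gap, $h$ is continuous and strictly increasing in $k^2$, and it alternates between ranging over $(-\infty,0)$ on odd gaps (starting with the first, below the continuum threshold) and ranging over $(0,+\infty)$ on even gaps.

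With this in hand I would fix a spectral gap $G_m$ and study the solvability of $h(k)=y_\varepsilon$, where $y_\varepsilon := \varepsilon\sum_{j=1}^n\alpha_j + \OO(\varepsilon^2)$ has the same sign as $\sum_{j=1}^n\alpha_j$ for all sufficiently small $\varepsilon>0$. If $\sum_{j=1}^n\alpha_j<0$ and $G_m$ is odd, then $y_\varepsilon$ eventually lies in the range $(-\infty,0)$ of $h|_{G_m}$, so the intermediate-value theorem together with the strict monotonicity yields a unique $k$ with $k^2 \in G_m$ solving \eqref{m>2}; if $G_m$ is even, the range $(0,+\infty)$ of $h|_{G_m}$ is disjoint from $y_\varepsilon$ and no eigenvalue appears. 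The case $\sum_{j=1}^n\alpha_j>0$ is symmetric and selects the even gaps instead. Strict monotonicity of $h$ also forces each eigenvalue so obtained to be simple.

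The main technical obstacle is controlling the $\OO(\varepsilon^2)$ term uniformly so that solvability of the exact characteristic equation $\det[\mathscr{N}_n(k)u^*(k),u_*(k)]=0$ really reduces, gap by gap, to solvability of the leading-order equation $h(k)=\varepsilon\sum_j\alpha_j$. Here one uses that the remainder is uniformly bounded on every compact subgap and that $|h(k)|\to\infty$ as $k^2$ approaches either endpoint of $G_m$, so any zero of the leading-order equation lies in the interior of $G_m$ with a definite slope; the implicit function theorem then transfers the isolated, simple zero to the exact equation for all small enough $\varepsilon$, and the quantitative growth of $h$ near the gap edges rules out spurious extra solutions. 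Combining these observations with the sign analysis above yields precisely the count of impurity states asserted by the theorem.
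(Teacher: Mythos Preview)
Your argument follows the paper's route exactly: the paper derives \eqref{m>2}, records that its right-hand side $h$ is continuous and strictly increasing in $k^2$ on each gap with range $(-\infty,0)$ on odd gaps and $(0,+\infty)$ on even ones, and then simply states the theorem without further proof. You supply the intermediate-value and implicit-function-theorem details that the paper leaves implicit.

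One inconsistency to repair: having correctly said that $h$ ranges over $(-\infty,0)$ or $(0,+\infty)$ on each gap, you then assert that $|h(k)|\to\infty$ as $k^2$ approaches \emph{either} endpoint of $G_m$. In fact $h\to0$ at the band-edge endpoint (where $\xi(k)^2\to1$) and diverges only at the other one. Since $y_\varepsilon=\varepsilon\sum_j\alpha_j$ is itself of order $\varepsilon$, the leading-order root sits near the band-edge endpoint rather than in a fixed interior compact subset of $G_m$, so your ``IFT on compacta plus growth of $|h|$ at the edges'' step does not apply there as written. A correct treatment near the band edge uses that $h$ has square-root behaviour in the distance to the edge (so $|h'|\to\infty$ there), which dominates any bounded $k$-derivative of the $\OO(\varepsilon^2)$ remainder and still forces a unique simple zero; the paper does not spell this out either.
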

 % -------------- %

\subsection{Weak periodic perturbations}

Let us turn now to perturbations which are periodic along the chain graph. Let $\bA$ and $\balpha$ be sequences with a period $p\in\mathbb{N}$, i.e.
 % -------------- %
\begin{alignat*}{3}
\bA && =\{A_j^{(p)}\}_{j\in\mathbb{Z}}\,,
	&& \qquad A_{j+p}^{(p)}&=A_j^{(p)}\,,
		\qquad j\in\mathbb{Z}\,,
\\
\balpha && =\{\alpha^{(p)}\}_{j\in\mathbb{Z}}\,,
		&&\qquad \alpha_{j+p}^{(p)}&=\alpha_j^{(p)}\,,
			\qquad j\in\mathbb{Z}\,,
\end{alignat*}
 % -------------- %
and ask about the effect of the corresponding perturbations on the spectrum. Then difference equation \eqref{discr_relat} now becomes
 % -------------- %
\[
\psi_{j+1}(k)
+
\frac{\cos (A^{(p)}_{j-1}\pi)}{\cos (A^{(p)}_j\pi)}\psi_{j-1}(k)
=
2\xi^{(p)}_j(k)
\psi_j(k)\,,
\]
 % -------------- %
with
 % -------------- %
\[
\xi_j^{(p)}( k)=
\frac{1}{\cos (A^{(p)}_j\pi)}
\bigg(\cos k\pi+
\frac{\alpha_j^{(p)}}{4 k}
\sin  k\pi\bigg)\,,
\qquad k\in\mathfrak{K}\,,\quad j\in\mathbb{Z}\,.
\]
 % -------------- %
The relation between the vector whose position indices differing by $p$ can be written as
 % -------------- %
\[
\prod_{j=1}^p
    N_j^{(p)}(k)\Phi_1(k)
    =
    \Phi_{p+1}(k)\,,\qquad k\in\mathfrak{K}\,,
\]
 % -------------- %
where $\Phi_j(k)$ has been described in the previous section, while $N_j^{(p)}(k)$ is defined in the same manner as $N_j(k)$ replacing $\xi_j$ with $\xi_j^{(p)}$. Since the determinant of $N_j^{(p)}$ is $\cos (A^{(p)}_{j-1}\pi)/\cos (A^{(p)}_{j}\pi)$, the matrix $\prod_{j=1}^pN_j^{(p)}$ has unit determinant, and therefore the product of its eigenvalues is one. By the Floquet-Bloch theorem $\Phi_{p+1}(k)=\rme^{\rmi p\pi\theta}\Phi_1(k)$ and the eigenvalues are given by $\rme^{\pm\rmi p\pi\theta}$ implying
 % -------------- %
\begin{equation}\label{eq:trace}
\mathrm{tr}\,
    \bigg(
        \prod_{j=1}^pN_j^{(p)}(k)
    \bigg)
=
    2\cos p\pi\theta,\qquad
\Im k\geq 0\,,\quad \Im\theta\geq0\,.
\end{equation}
 % -------------- %

 % -------------- %
\begin{ex}
The original unperturbed system is, of course, included. Indeed, suppose that $p=1$, $A_j^{(p)}=A$ and $\alpha^{(p)}_j=\alpha$, then the characteristic determinant of the corresponding operator reads as
 % -------------- %
\[
\cos\theta\pi=\xi(k)\,,
\qquad
	k\in\mathfrak{K}\,,
\qquad
\theta\in[-1,1)\,,
\]
 % -------------- %
which is what we get from the condition \eqref{squareEq} unless $A-\frac12\in\mathbb{Z}$ or $k \in\mathbb{N}$.
\end{ex}
 % -------------- %

In what follows we will assume that the perturbation is weak and put
 % -------------- %
\begin{align*}
A_j^{(p)}&=A_j^{(p)}(\varepsilon)=A+\varepsilon A_j\,,
	&	\hspace{-3em}A_{j+p}&=A_j\,,
	&	\hspace{-3em}j&\in\mathbb{Z}\,,\\
\alpha_j^{(p)}&=\alpha_j^{(p)}(\varepsilon)=\alpha+\varepsilon\alpha_j\,,
	&	\hspace{-3em}\alpha_{j+p}&=\alpha_j\,,
	&	\hspace{-3em}j&\in\mathbb{Z}\,,
\end{align*}
 % -------------- %
denoting by $-\Delta_{\varepsilon}^{(p)}$ the corresponding Hamiltonian. Recall from the previous section that in this case we have
 % -------------- %
\[
\mathscr{N}_p=N^p+\varepsilon\pi\tan A\pi\sum_{j=0}^{m-1}
N^jM_{p-j}N^{p-j-1}+\OO(\varepsilon^2)\,,
\]
 % -------------- %
which yields, in particular,
 % -------------- %
\[
\tr \mathscr{N}_p(k)=\tr(N(k)^p)+\varepsilon\pi\tan A\pi\sum_{j=1}^p
\tr(N(k)^{p-1}M_j)+\OO(\varepsilon^2)
\]
 % -------------- %
Given the definition of $N(k)$ it is straightforward to check that
 % -------------- %
\[
N(k)^p=
\begin{pmatrix}
U_p(\xi(k)) & -U_{p-1}(\xi(k))
\\[0.5em]
U_{p-1}(\xi(k)) & -U_{p-2}(\xi(k))
\end{pmatrix}\,,
\]
 % -------------- %
where $U_p$ are Chebyshev's polynomials of the second kind. Denoting conventionally by $T_p$ Chebyshev's polynomials of the first kind, and using the relation $2T_p(\xi)=U_p(\xi)-U_{p-2}(\xi)$ we arrive at
 % -------------- %
\[
\tr \mathscr{N}_p(k)=2\big(T_p(\xi(k))+\varepsilon\pi\tan (A\pi)
\xi(k)U_{p-1}(\xi(k))g(k)\big)\,.
\]
 % -------------- %
to state the result, denote by $-\Delta_{\varepsilon,j}$ the Hamiltonian of the `Kronig-Penney-type' chain graph, $p=1$, with the parameters $\alpha+\varepsilon\alpha_j$ and $A+\varepsilon A_j$ and with the corresponding resolvent set $\rho(-\Delta_{\varepsilon,j})$.

 % -------------- %
\begin{thm}
Assume that
 % -------------- %
\[
	\rho_p= \mathbb{R} \cap \bigcap_{j=1}^p
	\rho(-\Delta_{\varepsilon,j})
\]
 % -------------- %
is the intersection of all the spectral gaps of the Hamiltonians of the Kronig-Penney-type chain graphs indicated above. Then
 % -------------- %
\[
	\rho_p\subseteq \rho(-\Delta_{\varepsilon}^{(p)})\,.
\]
 % -------------- %
holds for all $\varepsilon$ small enough.
\end{thm}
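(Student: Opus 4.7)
The plan is to reduce the gap condition $|\tr\mathscr{N}_p(k)/2|>1$ for $-\Delta_\varepsilon^{(p)}$, coming from \eqref{eq:trace}, to the collection of gap conditions $|\xi_j(k,\varepsilon)|>1$ for the single-site comparison Hamiltonians $-\Delta_{\varepsilon,j}$, via a weak-coupling Chebyshev-averaging identity. A direct expansion gives
\begin{equation*}
\xi_j(k,\varepsilon)=\xi(k)+\varepsilon\pi\tan(A\pi)\,h_j(k)+\OO(\varepsilon^2),\qquad h_j(k):=A_j\xi(k)+\frac{\alpha_j\sin k\pi}{4k\pi\sin A\pi},
\end{equation*}
and comparison with the definition of $g$ from the preceding subsection shows that $\sum_{j=1}^p h_j(k)=\xi(k)g(k)$. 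Feeding this into Taylor's theorem for $T_p$ (via $T_p'=pU_{p-1}$), summing over $j$ and matching against the trace formula stated just before the theorem, one obtains the key identity
\begin{equation*}
\tfrac{1}{2}\tr\mathscr{N}_p(k)=\tfrac{1}{p}\sum_{j=1}^{p}T_p(\xi_j(k,\varepsilon))+\OO(\varepsilon^2).
\end{equation*}

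Next I would fix $k^2\in\rho_p$, so that $|\xi_j(k,\varepsilon)|>1$ for every $j=1,\dots,p$. Because $\xi_j(k,\varepsilon)-\xi_i(k,\varepsilon)=\OO(\varepsilon)$, no two of these quantities can straddle the interval $[-1,1]$ once $\varepsilon$ is below an absolute threshold, so they all share a common sign $s(k)\in\{\pm 1\}$. Writing $\xi_j(k,\varepsilon)=s(k)\cosh\phi_j(k)$ with $\phi_j(k)>0$, the elementary identity $T_p(s\cosh\phi)=s^p\cosh(p\phi)$ yields
\begin{equation*}
\Bigl|\tfrac{1}{p}\sum_{j=1}^{p}T_p(\xi_j(k,\varepsilon))\Bigr|=\tfrac{1}{p}\sum_{j=1}^{p}\cosh(p\phi_j(k))>1,
\end{equation*}
i.e., each Chebyshev term is at least $1$ in modulus, all carry the common sign $s(k)^p$, and at least one is strict. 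This is the weak-perturbation analogue of the classical Saxon--Hutner inequality. Combined with the key identity and the Floquet criterion~\eqref{eq:trace}, it gives $|\tr\mathscr{N}_p(k)/2|>1+\OO(\varepsilon^2)$, hence $k^2\in\rho(-\Delta_\varepsilon^{(p)})$ as soon as the remainder has been absorbed into the excess.

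The main obstacle, and the source of essentially all the technical work, is the uniformity of the threshold $\varepsilon_0$ across $k^2\in\rho_p$. On compact subsets of $\rho_p$ lying a positive distance from $\partial\rho_p$ the excess $\tfrac{1}{p}\sum\cosh(p\phi_j(k))-1$ is bounded below by a positive constant and the $\OO(\varepsilon^2)$ remainder is plainly negligible. The delicate region is a neighborhood of $\partial\rho_p$, where $\min_j(|\xi_j(k,\varepsilon)|-1)\to 0$ and the excess degenerates linearly; there one needs a quantitative $\OO(\varepsilon^2)$ estimate whose constant stays bounded as $|\xi|\to 1$. I would address this by pushing the matrix expansion $\mathscr{N}_p=N^p+\varepsilon\pi\tan(A\pi)\sum_{j=0}^{p-1}N^{j}M_{p-j}N^{p-j-1}+\OO(\varepsilon^2)$ already used in deriving the trace formula one order further and invoking the fact that $U_{p-1}(\xi)$ and $U_{p-2}(\xi)$ remain bounded at the unperturbed band edges $|\xi|=1$; a standard compactness/continuity argument then delivers the required uniform threshold and closes the proof.
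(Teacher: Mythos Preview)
Your proof is correct at the same level of rigor as the paper's, but the argument is organized quite differently. The paper does not use your averaging identity $\tfrac12\tr\mathscr{N}_p=\tfrac1p\sum_{j=1}^pT_p(\xi_j(k,\varepsilon))+\OO(\varepsilon^2)$ nor the hyperbolic parametrization $T_p(s\cosh\phi)=s^p\cosh(p\phi)$. Instead it expands $T_p$ and $U_{p-1}$ via their explicit coefficient formulae, rewrites $T_p(\xi)+\varepsilon\pi\tan(A\pi)\,\xi U_{p-1}(\xi)g$ as an alternating sum $\sum_{i}(-1)^i2^{p-2i-1}a(i)b_\varepsilon(i,\xi)$, and then recognizes each $b_\varepsilon(i,\xi)$ as an elementary symmetric expression in the quantities $z_j=\xi(k)(1+\varepsilon\pi\tan(A\pi)g_j(k))\approx\xi_j^{(p)}(k)$; the final lower bound comes from a term-by-term estimate using $|z_j|>1$.

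Your route is shorter and more conceptual: the identity $T_p'=pU_{p-1}$ collapses the first-order trace expansion into an average of Chebyshev values, after which the Saxon--Hutner mechanism is transparent. The paper's route is more computational but has the advantage of exhibiting the trace as a genuine multilinear polynomial in the $z_j$ (not merely its first-order Taylor match), which in principle is a step closer to a non-perturbative statement. On the uniformity of the $\OO(\varepsilon^2)$ remainder near $\partial\rho_p$ the paper is actually less explicit than you are: it simply writes the final inequality as ``$>1+\OO(\varepsilon^2)$'' without the quantitative discussion you provide in your last paragraph, so your treatment of that point is, if anything, more careful than the original.
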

 % -------------- %
\begin{proof}
To begin with observe that energies $k^2\in\mathbb{R}$ in the spectral gaps of all `Kronig-Penney' Hamiltonians $-\Delta_{\varepsilon,j}$ are now simply characterized by
 % -------------- %
\[
\big|\xi_j^{(p)}(k)\big|>1\,,
\]
 % -------------- %
which is obviously equivalent to
 % -------------- %
\begin{equation}\label{pureCrystal}
\big|\xi(k)\big(1+\varepsilon\pi\tan(A\pi)g_j(k)\big)\big|>1 +\OO(\varepsilon^2)\,,\qquad \varepsilon\to0\,,\quad j\in\mathbb{Z}\,,
\end{equation}
 % -------------- %
where
 % -------------- %
\[
g_j(k)=
	A_j+\frac{\alpha_j\sin k\pi}{4k\pi\xi(k)\sin A\pi}\,.
\]
 % -------------- %
Thus in order to prove the theorem it suffices to show that the validity of \eqref{pureCrystal} for a fixed $k$, $\Im k \geq0$, and all $j=1,\dots,p$, implies
 % -------------- %
\[
|\tr\mathscr{N}_p(k)|>2\,,
\]
 % -------------- %
or equivalently
 % -------------- %
\[
\big|
	T_p(\xi(k))
	+
	\varepsilon\pi\tan (A\pi)
		\xi(k)U_{p-1}(\xi(k))g(k)
\big|
>
1+\OO(\varepsilon^2)\,,
\qquad \varepsilon\to0\,.
\]
 % -------------- %
To this end we will employ explicit expressions of Chebyshev's polynomials of the first and second kind, namely
 % -------------- %
\[
T_p(\xi)=\frac{1}{2}\sum_{i=0}^{[\frac{p}{2}]}
\frac{p}{p-i}\binom{p-i}{i}(-1)^i(2\xi)^{p-2i}\,,
\]
 % -------------- %
and
 % -------------- %
\[
U_p(\xi)=\sum_{i=0}^{[\frac{p}{2}]}
\binom{p-i}{i}(-1)^i(2\xi)^{p-2i}\,.
\]
 % -------------- %
From them it follows that the characteristic determinant of the total Hamiltonian $-\Delta_{\varepsilon}^{(p)}$ can be rewritten as
 % -------------- %
\begin{align*}
&T_p(\xi)+\varepsilon\pi\tan (A\pi) \xi U_{p-1}(\xi)g
\\
&\qquad
=
\sum_{i=0}^{\left[\frac{p-1}{2}\right]}
(-1)^i2^{p-2i-1}a(i)b_\varepsilon(i,\xi) +  ((p-1)\;\mathrm{mod}\; 2)\,,
\end{align*}
 % -------------- %
where
 % -------------- %
\[
a(i):=\frac{(2i)!(p-i-1)!}{i!(p-1)!}\,.
\]
 % -------------- %
and
 % -------------- %
\[
b_\varepsilon(i,\xi):=\bigg(\binom{p}{p-2i}
+\binom{p-1}{p-2i-1}\varepsilon\pi\tan(A\pi) g\bigg)
\xi^{p-2i}\,.
\]
 % -------------- %
It is worth mentioning here that the factor $a$ is equal to one for $i=0$ and is less than or equal to two for $i\leq[\frac{p}{2}]$.
At the same time the factor $b_\varepsilon$ can be cast into the form
 % -------------- %
\[
b_\varepsilon(i,\xi)=\sum_{j_1<\dots<j_{p-2i}}
\prod_{n=1}^{p-2i}
\xi(1+\varepsilon\pi\tan (A\pi) g_{j_n})+\OO(\varepsilon^2)\,,
\qquad \varepsilon\to0\,.
\]
 % -------------- %
From what has been already said we finally conclude that
 % -------------- %
\begin{align*}
&
\big|
	T_p(\xi(k))+
	\varepsilon\pi\tan (A\pi)\xi(k)U_{p-1}(\xi(k))g(k)
\big|
\\
&\qquad\geq
	2^{p-1}\prod_{j=1}^p|\xi(k)(1+\varepsilon\pi\tan (A\pi) g_j(k))|
\\
&\qquad\quad
	-
	2^{p-3}a(1)\sum_{j_1<\dots<j_{p-2}}
\prod_{n=1}^{p-2}
|\xi(k)(1+\varepsilon\pi\tan(A\pi)g_{j_n}(k))|
\\
&\qquad\quad
-\ldots-((p-1)\;\mathrm{mod}\;2)\sum_{j=1}^p
|\xi(k)(1+\varepsilon\pi\tan(A\pi)g_j(k))|
+\OO(\varepsilon^2)
\\
&
\qquad
>1+\OO(\varepsilon^2)\,,\qquad \varepsilon\to0\,.
\end{align*}
 % -------------- %
By that, the proof of the theorem is complete.
\end{proof}

\noindent One is naturally interested whether the result remains valid also beyond the weak-coupling regime in analogy with Theorem~III.2.3.6. of \cite{AGHH05}. At present, this question remains open.

%%%%%%%%%%%%%%%%%%%%%%%%%%

% ------------------------------------------------------------------------
\end{document}